\documentclass[jmp]{revtex4-1}
\usepackage{amssymb}
\usepackage{amsmath}
\usepackage{amsfonts}
\usepackage{amsthm}
\usepackage{graphicx}
\usepackage{ulem}
\usepackage{bbm}
\usepackage{mathrsfs}

\newtheorem*{rpth}{Regular Perturbation Theorem}

\newtheorem*{lemma}{Lemma}


\def\ee{{\varepsilon}}
\def\ex{{ \mathbf{e}_{1} }}
\def\ey{{ \mathbf{e}_{2} }}
\def\ez{{ \mathbf{e}_{3} }}
\def\eo{{ E_{\omega}}}
\def\eot{{ \widetilde{E}_{\omega} }}
\def\eos{{\widetilde{E}_{\omega}^{s}}}
\def\rr{{ \mathbf{r}}}
\def\ro{{ \mathbf{r}_{0}}}
\def\kk{{ \mathbf{k} }}
\def\kx{{ k_{x} }}

\def\kmm{{k_{x,m}^{2}}}
\def\kz{{ k_{z} }}

\def\kc{{ k_{c}}}
\def\hb{{ h_{b} }}

\def\es{{ \zeta }}
\def\eh{{ \zeta_{n}}}
\def\ehf{{ \zeta_{n}}}
\def\sih{{E_{n}}}

\def\ib{{\mathrm{I}}}

\def\hv{{ \widehat{\mathrm{H}} }}
\def\pv{{ \widehat{\mathrm{P}}}}
\def\ckx{{ \mathbbm{C}_{k_{x}}}}
\def\se{{ \mathcal{S} }}
\def\re{{ \text{Re} }}
\def\im{{ \text{Im} }}


\begin{document}

\title{Electromagnetic Siegert states for periodic dielectric structures}

\author{ R\'{e}my F. Ndangali and Sergei V. Shabanov \\  Department of Mathematics, University of Florida, Gainesville, FL 32611, USA}

\begin{abstract}
The formalism of Siegert states to describe the resonant
scattering in quantum theory is extended to the resonant
scattering of electromagnetic waves on periodic
dielectric arrays.
The excitation of
electromagnetic Siegert
states by an incident wave packet and their decay is
studied.
The formalism
is applied to develop a theory of coupled electromagnetic
resonances arising in the electromagnetic scattering problem
for two such arrays separated by a distance $2h$ (or, generally,
when the physical properties of the scattering array depend
on a real coupling parameter $h$).
Analytic properties
of Siegert states as functions of the coupling parameter $h$
are established by the Regular Perturbation Theorem which is
an extension the Kato-Rellich theorem to the present case.
By means of this theorem, it is proved that if the scattering
structure admits a bound state in the radiation continuum
at a certain value of the coupling parameter $h$, then there
always exist regions within the structure in which the near field
can be amplified as much as desired by adjusting
the value of $h$. This establishes a rather general mechanism
to control and amplify optical nonlinear effects in periodically
structured planar structures possessing a nonlinear dielectric
susceptibility.
\end{abstract}

\maketitle

\section{Introduction}
\label{sec:1}

To outline the scope of the problem studied, it is
helpful first to glean over the basic concepts of quantum
resonant scattering theory.
Consider the scattering problem $\hv \Psi ={\cal E}\Psi$
 for a quantum mechanical
system described by the Hamiltonian
$\hv = -\frac 12 \Delta + V({\bf r})$
where $\Delta $ is the Laplace operator in space,
${\cal E}$ is the spectral parameter (the energy), and
the scattering potential $V({\bf r})$ is assumed
to be of a finite range, i.e., $V({\bf r})=0$
if $r=|{\bf r}|>r_0$ for some $r_0$.
To avoid excessive and unnecessary technicalities,
the potential is assumed
to be spherically symmetric, $V({\bf r})=V(r)$.
Suppose further that only a spherically symmetric
scattered wave is of interest (the $s-$wave).
Then the scattering theory \cite{b4,res} requires that in the asymptotic
region $r>r_0$
an eigenfunction
of the Hamiltonian $\hv$ corresponding to an eigenvalue
${\cal E}$ is the superposition of an incident
wave $\Psi_i\sim e^{i{\bf k}\cdot{\bf r}}$ and an
outgoing wave $\Psi_s$:
\begin{equation*}
\label{eq:i1}
\Psi=\Psi_i +\Psi_s,\quad  \Psi_s= S\, \frac{e^{ikr}}{r}\,,\quad
k=|{\bf k}|=\sqrt{2{\cal E}}
\end{equation*}
where $S$ is called the scattering amplitude . The system is said
to have a scattering resonance if $S$ has a pole in the complex
energy plane:
$$
S \sim \frac{\Gamma_n}{{\cal E}-{\cal E}_n+i\Gamma_n}
$$
where ${\cal E}_n$ is the resonance position (or the resonant
energy) and $\Gamma_n$ is the resonance width. In this case
the scattering cross section as a function of ${\cal E}$
exhibits a Lorentzian profile
\begin{equation*}
\sigma({\cal E})\sim |S|^2\sim
\frac{\Gamma_{n}^{2}}{({\cal E}-{\cal E}_{n})^2+\Gamma_{n}^{2}}
\end{equation*}
In 1939, A. Siegert showed \cite{Siegert}
that positions and widths of resonances
for a given quantum system can be determined by solving
the eigenvalue problem
under the outgoing wave boundary condition which
in the simplest case of a spherically symmetric scattered wave
(the $s-$wave) reads
$$
\hv \Psi_n = {\cal E}\Psi_n\,,\quad
(\partial_r(r\Psi_n) - ikr \Psi_n)\Bigr|_{r\geq r_0} = 0
$$
Under this boundary condition the Hamiltonian is no longer
hermitian, hence, the eigenvalues are generally complex,
${\cal E} = {\cal E}_n -i\Gamma_n$. The eigenstates $
\Psi_n$ associated
with  eigenvalues ${\cal E}_n -i\Gamma_n$
are now called Siegert states.
The scattered wave is then proved to have the form
$$
\Psi_s = \sum_n \frac{a_n}{{\cal E}-{\cal E}_n+i\Gamma_n}\,
\Psi_n + \Psi_a
$$
where the complex amplitudes $a_n=a_n[\Psi_i]$ are homogeneous linear functionals
of the incident wave and $\Psi_a$ is the so called
background or potential scattering which is analytic in ${\cal E}$ (see e.g.,\cite{mlthm,ostrovsky}).
If $\Psi_i$ is a wave packet with a narrow energy
distribution centered at ${\cal E}={\cal E}_c$ (e.g., a
narrow Gaussian
wave packet), then the amplitude $a_n$ is significant, provided
${\cal E}_c \approx {\cal E}_n$. In other words, a Siegert
state contributes to the scattering wave if the incident wave
has a resonant energy ${\cal E}_n$ determined by the real
part of the corresponding Siegert eigenvalue.

Note that the scattered modes always have positive energy
${\cal E}=k^2/2>0$. The range of ${\cal E}$ corresponding
to scattered modes is called {\it the radiation continuum}.
Suppose that $V=V(r)$, for simplicity. It is easy to see
that the Siegert states contain bound states of the system
${\cal E}_n <0$ and $\Gamma_n=0$ which decay exponentially
in the asymptotic region (as $k=i\sqrt{-2{\cal E}_n}$) and, hence,
are square integrable eigenstates of the Hamiltonian.
The bound states have a discrete spectrum which
lies {\it below} the radiation continuum (${\cal E}_n<0$).
If the asymptotic behavior of the potential $V$ is relaxed
by demanding that $V(r)\rightarrow 0$ as $r\rightarrow \infty$,
then there may exist bound states (i.e., square integrable solutions
of the stationary Schr\"odinger equation) with positive energies
${\cal E}_n>0$, $\Gamma_n=0$.
They are called {\it bound states in the radiation
continuum} or {\it resonances with the vanishing width}
(to emphasize the fact ${\cal E}_n>0$).
Their existence was first predicted
by von Neumann and Wigner in 1929 \cite{nw1929}.
It can further be shown that
the amplitudes $a_n$ vanish if $\Gamma_n=0$. Thus, no bound
state (either above or below the radiation continuum) of the system can be excited by an incident wave \cite{b4}.

The physical significance of Siegert states with $\Gamma_n>0$ can be
understood through the initial value problem for
the time dependent Schr\"odinger equation $i\partial_t \Psi=
\hv \Psi$ in which a wave packet $\Psi_i$, initially positioned
in the asymptotic region $r>r_0$ (i.e., at the initial time $t=0$
the support of $\Psi_i$ lies in the asymptotic region),
propagates into the scattering region $r<r_0$ and passes through it
giving rise to scattered waves. In the time-dependent picture,
the amplitude of each Siegert state (with a sufficiently
small $\Gamma_n$) that has been excited by the incident
wave packet is shown to decay exponentially \cite{b4}:
$$
\Psi_s(t) =\sum_n \Omega_n(t) \Psi_n +\Psi_a(t)\,,\quad
\Omega_n(t) \sim e^{-i{\cal E}_n t }e^{-\Gamma_n t}
$$
If the incident wave packet
passes through the scattering region faster than
the decay time $\tau_n =1/\Gamma_n$ of a Siegert state,
then the outgoing wave can be observed, and it resembles to
a stationary state with the energy ${\cal E}_n$. The more narrow the scattering resonance is, the longer
lives the corresponding Siegert state. So, Siegert states
with $\Gamma_n\ll 1$
may be interpreted as quasi-stationary states of the system
that can be excited by an incident wave and live long after
the scattering process is over.

Resonant scattering phenomena are also quite common in
electromagnetic theory. In the past decade much attention has been
devoted to experimental and theoretical studies of
reflection and transmission properties of
periodically perforated films and similar periodic planar structures
in infrared and visible light. The transmission and reflection
coefficients of such structures exhibit typical resonant peaks
at a wavelength about the structure period \cite{1998}
(for a review see \cite{abajo,svs}). Numerical studies
of the scattering problem for an electromagnetic wave packet
impinging a periodic structure show that a portion
of the electromagnetic energy of the wave packet is trapped
by the structure and remains in it long after the wave packet
has passed the structure \cite{bs1,bs2,bs3,svs}. Furthermore, each "trapped" electromagnetic mode decays slowly by emitting a monochromatic
radiation whose wave length is close to the resonant wavelength.
Broad and narrow resonances correspond to short-lived and
long-lived "trapped" modes, respectively.
It has been shown that
"trapped" modes may occur either due to the structure geometry
or the dispersive properties of the structure
material \cite{bs2,svs}.

If two planar periodic
structures are separated by a distance $2h$, then the resonance
positions and their widths depend on $h$ so that under some
conditions, there are critical values $h=h_b$ at which the widths
of some resonances approach
zero as $h\rightarrow h_b$ \cite{b5,svs,b6}.
Studies of specific examples show that when $h=h_b$
the structure
has a stable wave guiding mode localized near the structure
which cannot be excited by an incident radiation despite the fact
that the spectral parameters of this mode lie in the spectral
range of the scattered modes (the radiation continuum).
In other words, such wave guiding modes are a new type of
localized solutions of Maxwell's equations which are analogous
to quantum mechanical bounds states in the radiation continuum.

It has been observed that in the spectral range near these localized solutions, there is a local amplification
of near-fields (as compared to the amplitude of the incident
radiation) for periodic scattering structures. The amplification
effect becomes stronger if the system exhibits
more narrow resonances \cite{bs3,b5,svs,b6}.

The similarities between resonant quantum and electromagnetic
scattering phenomena are evident. It is therefore natural
to develop a similar mathematical formalism of Siegert states
to study the resonant electromagnetic scattering for periodic
structures. Although a qualitative analogy between electromagnetic
"trapped" modes and quantum Siegert states has been drawn,
no rigorous quantitative studies have been carried out.
It appears that there are two substantial differences between
the theories of quantum mechanics and electromagnetism which
prevent one from a trivial extension of quantum mechanical Siegert
states to electromagnetism.

First, the electric field ${\bf E}$
propagating in a non-homogeneous medium with a dielectric susceptibility
$\varepsilon$
satisfies the wave equation
$\frac{\varepsilon}{c^2} \partial_t^2 {\bf E}= \Delta {\bf E}$.
Consequently, the stationary analog of the Schr\"{o}dinger
equation, which is obtained when ${\bf E}({\bf r},t)=
e^{-i\omega t} {\bf E}_\omega({\bf r})$, is:
$$
\Bigl[-\Delta +k^2(1-\varepsilon({\bf r}))\Bigr]{\bf E}_\omega
=k^2 {\bf E}_\omega,\, k^2=\frac{\omega^2}{c^2}
$$
This equation can still be viewed as an eigenvalue problem,
but in contrast to the stationary
Schr\"odinger equation, the "potential"
$V(\rr)=k^2(1-\varepsilon({\bf r}))$ depends on
the spectral parameter $k^2$. Furthermore, the dielectric
susceptibility could also be a complex-valued function of $k$ if
the medium is dispersive, i.e.,
$\varepsilon=\varepsilon(k,{\bf r})$.

Second, for a periodic scattering
structure, the amplitude ${\bf E}_\omega$ must obey Bloch's
periodicity
condition which is not compatible with the asymptotic boundary
condition for quantum mechanical Siegert states. The stated differences require a modification of the very definition
of Siegert states in electromagnetic theory.
This problem is investigated in the present study.

In Section~\ref{sec:2}, electromagnetic Siegert states are defined
for periodic scattering structures by means of the theory
of compact integral operators when the polarization of the
electromagnetic wave is preserved, i.e., when the vector wave
equation is reduced to the scalar one. A relation between
the Siegert
states and the electromagnetic resonant scattering of waves is established.
In Section~\ref{sec:6}, the exponential decay
of electromagnetic Siegert states excited by an incident wave packet is investigated. This
allows one to identify the constructed Siegert states with
"trapped" quasi-stationary electromagnetic modes observed
in aforementioned numerical studies of periodic structures.
In quantum mechanics, resonances of a system that consists
of two coupled resonant systems depend on a coupling parameter
$h$, and this dependence is determined by the theory of
coupled resonances. A similar problem for electromagnetic
resonances is studied in Section~\ref{sec:3}. In the said section, the central result is
an extension of the Kato-Rellich theorem~\cite{b3} on analytic perturbation of resonance poles in quantum mechanics. In a similar fashion, the theorem established states analytic properties of
electromagnetic Siegert states as functions of the coupling
parameter $h$ (the Regular Perturbation Theorem).
By means of this theorem, it is then proved that
if the scattering system admits a bound state in the radiation
continuum (a resonance with the vanishing width) at some
value of the coupling parameter $h=h_b$, then the near-field
can be amplified as much as desired by a suitable adjustment
of the value of $h$ near $h_b$. This result provides a general mechanism
to enhance optical nonlinear effects in periodic structures.
Its efficiency has been recently demonstrated with a specific
example to generate second harmonics by a double array
of dielectric cylinders \cite{shg} (as depicted in Fig.~\ref{fig:1}(b)).

\section{Electromagnetic Siegert states}
\label{sec:2}

Consider a scattering problem for an electromagnetic
plane wave impinging a structure made of a non-dispersive
dielectric. The structure is described by
a dielectric function $\varepsilon({\bf r})$
that defines the value
of the dielectric constant at every point ${\bf r}$
occupied by
the structure and it equals the dielectric constant
of the surrounding medium otherwise. Without loss
of generality, the surrounding medium is assumed to
be the vacuum, i.e., $\varepsilon = 1$.
The material is said to be non-dispersive
if its dielectric constant does not depend on the frequency
of the incident wave. The structure is assumed to have a
translational symmetry along
a particular direction. In this case the dielectric function
is independent of one of the spatial coordinates, say,
the $y$ coordinate. If the incident wave is polarized
along the $y-$axis (the electric field is parallel to
the $y-$axis), then the scattering problem can be formulated
as the scalar Maxwell's equation:
\begin{equation}
\frac{\ee}{c^2}\partial_{t}^{2}E=\Delta E,\quad
\Delta=\partial_{x}^{2}+\partial_{z}^{2}
\label{eq:1}
\end{equation}
where $c$ is the speed of light in the surrounding medium
(the vacuum), and $E$ is the electric field.
For a planar structure, the function $\ee$ differs from $1$
only within a strip $(x,z)\in (-\infty,\infty)\times(-a,a)$.
A planar structure is periodic if
$\ee(x+D_{g},z)=\ee(x,z)$ where $D_{g}$ is the period.
In what follows, the units of length are chosen so that
$D_{g}=1$. It is further assumed that the structure's dielectric
constant exceeds that of the surrounding medium, i.e.,
$\ee({\bf r})\geq 1$ for all ${\bf r}$, and $\ee$ is bounded.
Two examples of such structures are shown in
Fig.~\ref{fig:1} (Panels (a) and (b)) depicting periodic
arrays of infinite
dielectric cylinders, all parallel to the $y-$axis.
The function
$\ee({\bf r})$ is piecewise continuous. For general
periodic structures, $\ee$ is invariant under the action
of a suitable affine Coxeter group in the $xy-$plane.
In this case, vector Maxwell's equations should be used
as the polarization of the incident wave is not preserved
in the scattering process. This general problem will not be
studied here.

If the incident wave has a fixed frequency, then
the field $E$ has a harmonic time
dependence $E(\rr,t)=\eo(\rr)e^{-i\omega t}$ in
Eq. (\ref{eq:1}) where the
amplitude $\eo$ satisfies the equation:
\begin{equation}
\Delta \eo(\rr)+k^2\ee(\rr)\eo(\rr)=0,\quad
k^2=\frac{\omega^2}{c^2}
\label{eq:2}
\end{equation}
Let the $x,\, y$, and $z$ axes be oriented by the unit vectors $\ex$, $\ey$, and $\ez$, respectively.
In the scattering theory, $\eo$ is sought as a superposition of an incident wave, and the corresponding scattered
wave $E_\omega^s$:
\begin{equation}
\eo(\rr)=e^{i\kk\cdot\rr}+E_{\omega}^{s}(\rr),\quad
\kk=\kx\ex+\kz\ez
\label{eq:3}
\end{equation}
where $\kk$ is the wave vector of the incident wave.
The units of the field $E$ are chosen so that the amplitude
of the incident wave is 1.
The scattered wave
obeys the outgoing wave boundary conditions at
the spatial infinity. The periodicity of the scattering structure
requires that the amplitude $\eo$ satisfies Bloch's periodicity condition,
\begin{equation}
\eo(x+1,z)=e^{i\kx}\eo(x,z)
\label{eq:4}
\end{equation}
Under the specified boundary conditions, the amplitude
$E_\omega$ satisfies
the Lippmann-Scwinger integral equation:
\begin{equation}
\eo(\rr)=\hv[ \eo](\rr)+e^{i\kk\cdot\rr}
\label{eq:5}
\end{equation}
where $\hv$ is the integral operator defined by,
\begin{equation*}
\hv[E](\rr)=\frac{k^2}{4\pi}\int_{\mathbbm{R}^2}
(\ee(\ro)-1)G_{k}(\rr|\ro)E(\ro) d\ro
\label{eq:6}
\end{equation*}
The kernel $G_{k}(\rr|\ro)$ is the Green's function for
the Helmholtz equation, $(\Delta +k^2) G_{k}(\rr|\ro)=
-4\pi\delta(\rr-\rr_0)$,
with the outgoing boundary condition.
It is given by $G_{k}(\rr|\ro)=i\pi H_{0}(k|\rr-\ro|)$
where $H_{0}$ is the Hankel function of the first
kind of order 0.

\begin{figure}[h t]
    \centering
    \includegraphics{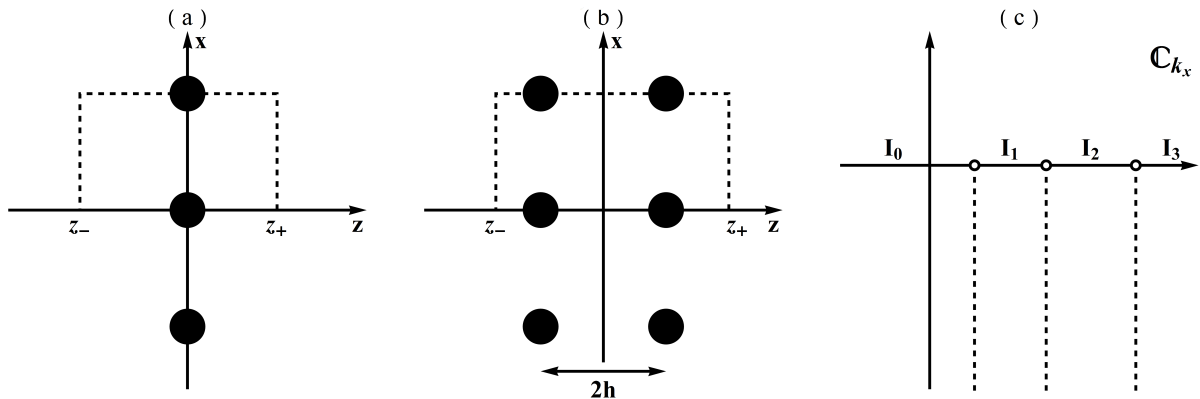}
    \caption{Panel (a): An example of
scattering structures considered in this work. Infinitely long parallel cylinders are placed parallel to a $y$-axis periodically along an $x$-axis in the vacuum $\ee_{0}=1$. The cylinders are characterized by a dielectric constant $\ee_{c}>1$. The rectangle $D$ in Eq.(\ref{eq:7}) is enclosed by the dashed line, and the $z$-axis. \newline
    Panel(b): Two periodic arrays such as the one on Panel (a) are placed parallel to each other at a distance $2h$ between the axes of any two opposing cylinders. \newline
    Panel (c): The $\ckx$ plane. The cuts run vertically from the diffraction thresholds (indicated by empty circles on the real axis) into the lower half of the complex plane. The diffraction thresholds divide the real axis into a countable set of intervals denoted $\ib_{l},\,l\geq 0$. The interval $\ib_{0}$ lies below the radiation continuum. The rest of the intervals partition the radiation continuum. }
    \label{fig:1}
    \end{figure}

The existence of solutions to the Lippmann-Schwinger integral equation that also satisfy Bloch's condition is established by extending the operator valued function $k^2\mapsto \hv(k^2)$ to a suitably cut complex plane, and thereafter, applying the Fredholm theory of compact operators.
To this end, let the $x-$component $\kx$ of the incident wave vector $\kk$ be fixed. The magnitude $k=|\kk|$ varies
only in its $z-$component,
$\kz=\pm \sqrt{k^2-k_{x}^{2}}$. Now let $S_{\ee}$ be the support of the function $(x,z)\rightarrow (\ee(x,z)-1)$ in the strip $S=[0,1]\times (-\infty,\infty)$ of the $x,z$-plane, and let $D$ be a rectangle in $S$ containing $S_{\ee}$, i.e.,
\begin{equation}
S_{\ee}\subset D=[0,1]\times [z_{-},z_{+}]
\label{eq:7}
\end{equation}
Then for an amplitude $\eo$ satisfying Bloch's condition,
\begin{equation}
\hv[\eo](\rr)=\frac{k^2}{4\pi}\sum_{m\in {\mathbbm{Z}}}e^{i m\kx}
\int_{D} (\ee(\ro)-1)G_{k}(\rr|\ro-m\ex)\eo(\ro) d\ro
\label{eq:8}
\end{equation}
where $\mathbbm{Z}$ denotes the set of all integers. This equation defines $\hv$ as an operator on $L^{2}(D)$, the Hilbert space of square integrable functions on $D$ with respect to the Lebesgue measure. Note that $\hv$ does not depend on the rectangle $D$ which to some extent is arbitrary. Indeed, the integrals of Eq.(\ref{eq:8}) extend only over the support $S_{\ee}$ of the function $(x,z)\mapsto (\ee(x,z)-1)$ in the strip $S=[0,1]\times (-\infty,\infty)$ of the $x,z$-plane. The rectangle $D$ is only introduced to obtain a connected and compact region of integration, which
is convenient for the subsequent analysis.
In general, the support of the function
$(x,z)\mapsto (\ee(x,z)-1)$ is not necessarily
connected as, for example, in the case of multiple scatterers in the strip $S$ depicted in Fig.~\ref{fig:2}(b).
The solution to the Lippmann-Schwinger
integral equation (\ref{eq:5}) will be sought in the Hilbert space $L^{2}(D)$. Such a solution then extends naturally
to the full $x,z-$plane by Bloch's periodicity
condition (\ref{eq:4}), and the Lippmann-Schwinger integral equation.

It is not hard to see that the summation and integration can be interchanged in Eq.(\ref{eq:8})
even though the underlying series is only conditionally convergent. The Poisson summation formula is then applied to yield the following form for the integral operator $\hv$,
\begin{subequations}\label{eq:9}
\begin{equation}\label{eq:9.1}
\hv[E](\rr)=\int_{D} (\ee(\ro)-1)H(k^2;\rr-\ro)E(\ro) d\ro,
\quad E\in L^{2}(D)
 \end{equation}
where the function $H$ is given by,
 \begin{equation}\label{eq:9.2}
 H(\es;x,z)=\frac{i \es}{2}\sum_{m\in \mathbbm{Z}} \frac{e^{i(x k_{x,m}+|z|\sqrt{\es-k_{x,m}^2})}}{\sqrt{\es-k_{x,m}^2}},\quad k_{x,m}=k_{x}+2\pi m,\quad \es\neq k_{x,m}^{2},\quad
m\in\mathbbm{Z}
 \end{equation}
\end{subequations}
The square roots are defined by choosing the branch cut of the logarithm along the negative imaginary axis, i.e., if $w$ is a nonzero complex number, then
\begin{equation}
\log w=\ln |w|+i \arg w,\quad
\,-\frac{\pi}{2}<\arg w<\frac{3\pi}{2}
\label{eq:10}
\end{equation}
In the scattering theory, the branch points
$k_{x,m}^{2}$, $ m\in \mathbbm{Z}$,
are called the {\it diffraction thresholds}, and they will be referred to as such in what
follows.

Note that for $w$ real, then $\im\{\sqrt{w}\}\geq 0$. Therefore, $\im\left\{\sqrt{\es-\kmm}\right\}>0$ for all but a finite number of diffraction thresholds $k_{x,m}^{2}$ such that $\es=k^2>\kmm$. It follows that in the series of Eq.(\ref{eq:9.2}) all but a finite number of terms decay exponentially as $|z|\rightarrow\infty$. Hence the said series converges uniformly on $D$. The uniform convergence on $D$ still holds when the range of the variable $\es=k^2$ is extended to the cut complex plane $\ckx$ obtained by excluding all the vertical half lines running from the diffraction thresholds into the lower half of the complex plane, i.e.,
\begin{equation*}
\ckx=\mathbbm{C}-\bigcup_{m\in \mathbbm{Z}}\{k_{x,m}^{2}-i s,\, s\leq 0\}
\label{eq:10.11}
\end{equation*}
This is because in this case too, $\im\left\{\sqrt{\es-\kmm}\right\}>0$ except for a finite number of terms. It follows that $\es\mapsto H(\es;\rr)$ extends to an analytic function on $\ckx$. Figure~\ref{fig:1}(c) shows a sketch of the cut plane $\ckx$.

Since for all $\es\in \ckx$, the kernel $(\rr,\ro)\mapsto (\ee(\ro)-1)H(\es;\rr-\ro)$ is square integrable on $D\times D$, the integral operator $\hv(\es)$ is Hilbert-Schmidt in $\mathcal{L}\{L^{2}(D)\}$, the space of bounded linear operators on $L^{2}(D)$. In particular, $\hv(\es)$ is compact for each $\es\in \ckx$. Moreover, the operator valued function defined on $\ckx$ by $\es\mapsto \hv(\es)$ is analytic. By the analytic Fredholm theorem~\cite{b2}, it then follows that if the inverse operator $[1-\hv(\es)]^{-1}$ exists at some point $\es\in \ckx$, it must be meromorphic throughout $\ckx$. That the said inverse exists at some point is obvious. Indeed, as $\es\rightarrow 0$ in $\ckx$,
the norm of the operator $\hv(\es)$, i.e., $||\hv(\es)||=||(\ee(\cdot)-1)H(\es;\cdot)||_{L^{2}(D\times D)}$, also converges to zero. In particular for $\es$ near $0$, $||\hv(\es)||<1$ and therefore $[1-\hv(\es)]^{-1}$ exists as a Neumann series. Thus, the generalized resolvent $\es\mapsto [1-\hv(\es)]^{-1}$ is meromorphic in $\ckx$.

The poles $\{\es_{n}\}_{n}$ of $[1-\hv(\es)]^{-1}$ are isolated and form a discrete set. The same analytic Fredholm theorem guarantees that at each of the poles $\es_{n}$, there exists a nonzero solution to the generalized eigenvalue problem
$$
\hv(\es_{n})[E_{n}]=E_{n}
$$
The generalized eigenfunctions $E_n$
will be referred to as the {\it Siegert states}.

It will be assumed throughout the rest of the work that the Siegert states are nondegenerate, i.e., the poles $\es_{n}$ are all {\it simple}. The results to be derived in the rest of the work could easily be generalized to the case when
the poles are not simple. Yet, to our knowledge, there has been no report of resonance poles of multiplicity greater than one in either the literature of electromagnetism or that of quantum mechanics. However, we do not have a rigorous proof that higher multiplicity poles cannot occur
in the studied scattering problem.
For the structures depicted in Fig.~\ref{fig:1},
perturbation theory (when the radius of cylinders
is much less than the structures period)
shows that all the poles are simple
~\cite{b5,b6}.

The assumed nondegeneracy of the Siegert states $\{E_{n}\}_{n}$ implies that the residues $\{\hv_{n}\}_{n}$ of $[1-\hv(\es)]^{-1}$ at the poles $\{\es_{n}\}_{n}$ are rank one operators on $L^{2}(D)$. In other words, if $\langle f,g\rangle =\int_{D} \overline{f(\ro)} g(\ro) d\ro$ is the inner product on the Hilbert space $L^{2}(D)$, then
\begin{equation}
\forall n,\, \exists \varphi_{n}\in L^{2}(D): \forall \psi\in L^{2}(D),\quad \, \hv_{n}[\psi]=\langle \varphi_{n}, \psi\rangle E_{n}
\label{eq:11}
\end{equation}
Since the incident wave $E_{i}(k^2;\rr)=e^{i\kk\cdot\rr}$ is analytic in $\es=k^2$ on $\ckx$, it follows that the amplitude $\eo=[1-\hv(\es)]^{-1}[E_{i}(\es;\cdot)]$ extends to a meromorphic function of the variable
$\es=k^2$ on $\ckx$. Its partial fraction expansion reads
\begin{equation}
\eo(\rr)=e^{i\kk\cdot \rr}+\sum_{n}\frac{a_{n}}{k^{2}-k_{n}^{2}+i\Gamma_{n}}E_{n}(\rr)+E_{a}(k^2;\rr),\quad
\, a_{n}=\langle \varphi_{n}, E_{i}(k^{2};\cdot)\rangle \Big|_{k^2=k_{n}^{2}-i\Gamma_{n}}
\label{eq:12}
\end{equation}
where each pole $\es_{n}$ has been decomposed in its real and imaginary parts as $\es_{n}=k_{n}^{2}-i\Gamma_{n}$ owing to the fact that the imaginary parts of the poles are nonpositive as will be shown shortly. The remainder $E_{a}$ is analytic in $k^2$ on the cut plane $\ckx$. It describes the so called {\it background} or {\it potential scattering},
while the Siegert states account for the resonant scattering.

Since the Siegert states are solutions to the homogeneous linear equation $\hv(\es)[E]=E$, they should be normalized in some way in order for the functions $\varphi_{n}$ in Eq.(\ref{eq:11}) and the coefficients $a_{n}$ in Eq.(\ref{eq:12}) to be uniquely defined. A natural normalization
condition will be introduced in Section~\ref{sec:4} for Siegert states that arise as perturbations of bound states in the radiation continuum.

That all the poles of the generalized resolvent $\es\mapsto [1-\hv(\es)]^{-1}$ have nonpositive imaginary parts, i.e., $\Gamma_{n}\geq 0,\, \forall n$, follows from a series of observations. First, note that a Siegert state $E_{n}$
corresponding to the pole $\es_{n}=k^{2}_{n}-i\Gamma_{n}$ satisfies Eq.(\ref{eq:2}) for $k^2=\es_{n}$, and therefore
\begin{equation*}
\overline{E_{n}}\Delta E_{n}-E_{n}\Delta \overline{E_{n}}-2i\Gamma_{n} \ee |E_{n}|^2=0
\label{eq:13}
\end{equation*}
By Green's theorem, it follows that,
\begin{equation}
2i\Gamma_{n}\int_{D'}|E_{n}(\ro)|^2\ee(\ro) d\ro =  \oint_{\partial D'} (\overline{E_{n}(\ro)}\nabla E_{n}(\ro)-E_{n}(\ro)\nabla \overline{E_{n}(\ro)})\cdot d\mathbf{n_{0}}
\label{eq:14}
\end{equation}
where $D'=[0,1]\times[z_{1},z_{2}]$ is any rectangle containing the rectangle $D$, and $\mathbf{n_{0}}$ is the outward normal. By Bloch's periodicity condition, the integrals on the line segments $(x,z)\in \{0,1\}\times [z_{1},z_{2}]$ of $\partial D'$ are canceled out so that the integral on the right side is to be carried out
only over the line segments $(x,z)\in [0,1]\times \{z_{1},z_{2}\}$ of $\partial D'$. The result can be expressed in terms of the scattering amplitudes $S_{m}^{\pm}$
associated with the asymptotic behavior
of the Siegert state $E_{n}$:
\begin{subequations}\label{eq:15}
\begin{equation}\label{eq:15.1}
E_{n}(\rr)=\begin{cases}
           \displaystyle{\sum_{m\in\mathbbm{Z}} S_{m}^{-} e^{i(x k_{x,m}-z\sqrt{\es_{n}-k_{x,m}^{2}})},\quad \, z<z_{-}}\\
           \displaystyle{\sum_{m\in\mathbbm{Z}} S_{m}^{+} e^{i(x k_{x,m}+z\sqrt{\es_{n}-k_{x,m}^{2}})},\quad \, z>z_{+}}
           \end{cases}
           \end{equation}
Equations (\ref{eq:15.1}) follow immediately from equations (\ref{eq:9}). In particular, the amplitudes $S_{m}^{\pm}$ are given by the formulas,
\begin{equation}\label{eq:15.2}
S_{m}^{\pm}=\frac{i\es_{n}}{2\sqrt{\es_{n}-k_{x,m}^{2}}}\int_{D}(\ee(\ro)-1)E_{n}(\ro)e^{i(-x_{0} k_{x,m}\mp z_{0} \sqrt{\es_{n}-k_{x,m}^{2}})}d\ro,\quad\, \ro=x_{0}\ex+z_{0}\ez
\end{equation}
\end{subequations}
Evaluating the right hand side of Eq.(\ref{eq:14}) yields,
\begin{equation}
\Gamma_{n}=\frac{\sum_{m\in\mathbbm{Z}} \re\left\{\sqrt{\es_{n}-\kmm}\right\}\left(|S_{m}^{+}|^{2}e^{-2z_{2}\im\left\{\sqrt{\es_{n}-\kmm}\right\}}
+|S_{m}^{-}|^{2}e^{2z_{1}\im\left\{\sqrt{\es_{n}-\kmm}\right\}}\right)}{{\int_{D'}|E_{n}(\ro)|^{2}\ee(\ro) d\ro}}
\label{eq:16}
\end{equation}
The series in Eq.(\ref{eq:16}) is then split into the sums over the two complementary index sets,
\begin{subequations}\label{eq:17}
\begin{equation}\label{eq:17.1}
\ib^{op}(\es_{n})=\{m\in \mathbbm{Z}\ |\
 \re\{\es_{n}\}\geq \kmm\},\quad\,\ib^{cl}(\es_{n})=\{m\in \mathbbm{Z}\ |\ \re\{\es_{n}\}< \kmm\}
\end{equation}
In the scattering theory, when $\es=k^2$ is real, the sets $\ib^{op}(k^2)$ and $\ib^{cl}(k^2)$ label open and closed diffraction channels respectively; thus the superscripts ``$op$'' for open, and ``$cl$'' for closed.

The choice of the branch cut for the logarithm given in Eq.(\ref{eq:10}) ensures that
\begin{equation}\label{eq:17.2}
\im\left\{\sqrt{\es_{n}-\kmm}\right\}>0,\quad
 \forall m\in \ib^{cl}(\es_{n})
\end{equation}
\end{subequations}
In particular, in the series of Eq.(\ref{eq:16}), the contributions from terms whose indices lie in $\ib^{cl}(\es_{n})$ decay exponentially as $z_{2}\rightarrow +\infty$ and $z_{1}\rightarrow -\infty$. In the said limits, the aforementioned series is therefore reduced to a finite
sum over the set $\ib^{op}(\es_{n})$:
\begin{equation*}
\Gamma_{n}=\lim_{\begin{subarray}{l} z_{1}\rightarrow -\infty \\ z_{2}\rightarrow +\infty\end{subarray}}\frac{\sum_{m\in \ib^{op}(\es_{n})} \re\left\{\sqrt{\es_{n}-\kmm}\right\}\left(|S_{m}^{+}|^{2}e^{-2z_{2}\im\left\{\sqrt{\es_{n}-\kmm}\right\}}
+|S_{m}^{-}|^{2}e^{2z_{1}\im\left\{\sqrt{\es_{n}-\kmm}\right\}}\right)}{{\int_{D'}|E_{n}(\ro)|^{2}\ee(\ro) d\ro}}
\label{eq:18}
\end{equation*}
Observing that $\re\left\{\sqrt{\es_{n}-\kmm}\right\}\geq 0$ for all $m\in \ib^{op}(\es_{n})$, and since the dielectric function $\ee$ is positive, it follows that $\Gamma_{n}\geq 0$, and therefore the imaginary part of the pole $\es_{n}$ is indeed necessarily negative or zero, i.e., all the poles $\es_{n}$ are in the lower half of the cut complex plane $\ckx$.

\section{Regular perturbation theory}
\label{sec:3}

Suppose that the dielectric function $\ee$ depends on a real parameter $h$ which is called a {\it coupling
parameter}. The simplest example
of such a coupling is given by
two periodic arrays of dielectric scatterers
that are parallel and separated by
the distance $2h$ (Fig.~\ref{fig:1}(b)). The arrays are embedded in a medium of dielectric susceptibility $1$. Each array is characterized by a dielectric function $\ee_{i}(x,z),\, i=1,2$ such that $\ee_{i}\geq 1$ on the scatterers, and $\ee_{i}$ is of period $1$ in the $x$-direction. The resulting dielectric function describing the scattering of light on the structure is then,
\begin{equation}
\ee(h;x,z)=1+(\ee_{1}(x,z-h)-1)+(\ee_{2}(x,z+h)-1)
\label{eq:19}
\end{equation}

In the most general case, the dependence of the dielectric function $\ee$ on the coupling parameter $h$ implies that the poles of the generalized resolvent $\es\mapsto [1-\hv(h,\es)]^{-1}$ also depend on $h$. If the coupling is sufficiently smooth as indicated in the theorem stated below, then the poles of the generalized resolvent as well as the corresponding Siegert states depend continuously on $h$.

Before stating the theorem, a clarification must be made on the boundary of the rectangle $D$ in Eq.(\ref{eq:7}). This rectangle was chosen to contain the support $S_{\ee}$ of the function $(x,z)\mapsto (\ee(x,z)-1)$ in the strip $S=[0,1]\times(-\infty,\infty)$ of the $x,z$-plane. In the current situation, the dielectric function depends on the coupling parameter $h$, and therefore the support $S_{\ee(h)}$ in question could change with $h$ resulting in a different choice for the set $D$. This is the case, for instance, in the example of the two parallel arrays separated by
the distance $2h$. If the two scatterers in the strip $S$ are taken further apart, the rectangle $D$ is stretched further
accordingly. So, in what follows it will always be assumed that $h$ varies in an open possibly finite interval $J_{0}$ such that $\bigcup_{h\in J_{0}} S_{\ee(h)}$ is bounded, and the rectangle $D$ will be chosen to contain the latter set. The Theorem on the regular perturbation of electromagnetic Siegert states is
formulated as follows:

\begin{rpth}
Suppose that the map $(h,\es)\mapsto\hv(h,\es)$ from $J_{0}\times\ckx$ to $\mathcal{L}\{L^{2}(D)\}$ is continuously Frechet differentiable. Further, suppose that for some $h_{n}\in J_{0}$, the generalized resolvent $\es\mapsto [1-\hv(h_{n},\es)]^{-1}$ has a simple pole $\es_{n}\in\ckx$, and let $E_{n}$ be the corresponding Siegert state. Then there exists an open interval $J\subset J_{0}$ containing $h_{n}$, and a unique continuously differentiable function $h\mapsto \ehf(h),\, h\in J$, such that $\ehf(h_{n})=\es_{n}$, and for all $h\in J$, $\ehf(h)$ is a simple pole of the generalized resolvent $\es\rightarrow [1-\hv(h,\es)]^{-1}$. If $\sih(h)$ is the Siegert state corresponding to the pole $\ehf(h)$, then the function $h\mapsto \sih(h)$ from $J$ to $L^{2}(D)$ is continuously differentiable.
\end{rpth}

This theorem is an extension of the Kato-Rellich theorem~\cite{b3} to the present case. Due to its length and complexity, the proof is left to the Appendix. It should be noted, however, that if the functions $(x,z)\mapsto \ee_{i}(x,z),\,i=1,2$ in Eq.(\ref{eq:19}) are piecewise differentiable, then $(h,\es)\mapsto\hv(h,\es)$ is Frechet differentiable so that the theorem does indeed hold for two parallel arrays
separated by the distance $h$.

\subsection{Perturbation of bound states in the radiation continuum}
\label{sec:4}

The diffraction thresholds $k_{x,m}^{2}$ on the real line depend on the $x-$component $\kx$ of the incident wave vector.
They can be ordered independently of the parameter $\kx$ by defining a sequence $\es_{\pm m}^{*}(\kx)=(2\pi m\pm \left|[k_{x}]\right|)^{2},\, m\in \{0,1,2,\ldots\}$, where $[k_{x}]$ is the argument of $e^{i\kx}$ in the interval $(-\pi,\pi]$. The sequence $\{\es_{\pm m}^{*}(\kx)\}_{m=0}^{\infty}$ coincides with the sequence of diffraction thresholds $\{(\kx+2\pi m)^{2}\}_{m\in\mathbbm{Z}}$, and for all $\kx$,
\begin{equation*}
\es_{0}^{*}(\kx)\leq\es_{-1}^{*}(\kx)\leq\es_{1}^{*}(\kx)\leq\es_{-2}^{*}(\kx)\leq\es_{2}^{*}(\kx)\leq\es_{-3}^{*}(\kx)\leq\ldots
\label{eq:20}
\end{equation*}

The threshold $\es_{0}^{*}(\kx)$ is called {\it the radiation continuum threshold}, and the interval
$\ib_{0}=(-\infty,\es_{0}^{*}(\kx))$ is said to lie {\it below the radiation continuum}.
This is because whenever $k^2<\es_{0}^{*}(\kx)$, then the scattered amplitude $E_{\omega}^{s}$ in Eq.(\ref{eq:3}) necessarily decays exponentially in the spatial infinity, $|z|\rightarrow\infty$, as can be inferred from Eqs.(\ref{eq:3}), (\ref{eq:5}), and (\ref{eq:9}). Hence, no electromagnetic
flux is carried to the spatial infinity in this spectral
range. In contrast,
if $k^2>\es_{0}^{*}(\kx)$, the amplitude $E_{\omega}^{s}$
oscillates and represents outgoing (scattered) radiation
that carries an electromagnetic flux to
the spatial infinity.
The spectral range above $\es_{0}^{*}(\kx)$
on a real line is therefore referred to as {\it the radiation continuum}. It is the disjoint union of the intervals,
\begin{equation*}
\ib_{1}=(\es_{0}^{*}(\kx),\es_{-1}^{*}(\kx)),\quad
\ib_{2}=(\es_{-1}^{*}(\kx),\es_{1}^{*}(\kx)),\quad
\ib_{3}=(\es_{1}^{*}(\kx),\es_{-2}^{*}(\kx)),\quad\ldots
\label{eq:21}
\end{equation*}
Note that when $[\kx]$ is $0$ or $\pi$, some of the intervals $\ib_{l}$ are empty, owing to the fact that some of the diffraction thresholds fuse.

Returning to Siegert states, recall that these states are generalized eigenfunctions to the generalized eigenvalue problem $\hv(\es_{n})[E_{n}]=E_{n}$ on $L^{2}(D)$ with complex eigenvalues $\es_{n}$. These states are naturally
extended to the whole $x,z-$plane by means of
Eqs.~(\ref{eq:15}). In general, if the generalized eigenvalue $\es_{n}=k_{n}^{2}-i\Gamma_{n}$ lies below or in the interval $\ib_{0}$ in the $\ckx$-plane, then the corresponding Siegert state is square integrable on the strip $S=[0,1]\times(-\infty,\infty)$ of the $x,z$-plane. This is because for such states, the set $\ib^{op}(\es_{n})$ of Eq.(\ref{eq:17.1}) is empty, and therefore by Eqs.(\ref{eq:15.1}) and (\ref{eq:17.2}), these states decay exponentially in the asymptotic region $|z|\rightarrow\infty$. In particular, the Siegert states for which the pole $\es_{b}$ is real and less than the continuum threshold $\es_{0}^{*}(k_{x})$ are the {\it bound states below the radiation continuum} of the system.

On the contrary, the Siegert states $E_{n}$ whose corresponding generalized eigenvalues $\es_{n}=k^{2}_{n}-i\Gamma_{n}$ lie below an interval $\ib_{l},\, l\geq 1$ of the radiation continuum, i.e., $k^{2}_{n}\in \ib_{l}$ and $\Gamma_{n}>0$, are not necessarily square integrable on the strip $S$. This is because for these states, the set $\ib^{op}(\es_{n})$ is not empty, and consequently, the terms of the series in Eq.~(\ref{eq:15.1}) indexed by this set are unbounded. However, when the pole $\es_{n}$ is in $\ib_{l},\, l\geq 1$, i.e., $\Gamma_{n}=0$, it will be shown shortly that the resulting Siegert state is square integrable on the strip $S$ and therefore, such a Siegert state is a {\it bound state in the radiation continuum}. The rest of this subsection
is devoted to the proof of this assertion.

To proceed, suppose that for some value $h_{b}$ of the coupling constant $h$, there exists a Siegert state $E_{b}$ whose corresponding generalized eigenvalue $\es_{b}$ is real and lies in an interval $\ib_{l},\, l\geq 1,$ of the radiation continuum. By the Regular Perturbation Theorem there exist continuously differentiable functions $h\mapsto \ehf(h)$ and $h\mapsto \sih(h)$ on an interval $J$ containing $h_{b}$ such that $\ehf(h_{b})=\es_{b}$, and $\sih(h_{b})=E_{b}$. Furthermore, $\sih(h)$ is the Siegert state corresponding to the pole $\ehf(h)$ of the generalized resolvent $\es\mapsto [1-\hv(h,\es)]^{-1}$ for all $h\in J$. Put $\ehf(h)=k_{n}^{2}(h)-i\Gamma_{n}(h)$ as before. Then $\Gamma_{n}(h_{b})=0$. Without loss of generality, the interval $J$
is assumed to be
sufficiently small so that for
all $h\in J\backslash\{h_{b}\}$, $\Gamma_{n}(h)\neq 0$.
Equation~(\ref{eq:16}) for the Siegert state $\sih(h)$ is then rewritten as
\begin{equation*}
\int_{D'}|E_{n}(\ro)|^{2}\ee(h;\ro) d\ro=\frac{1}{\Gamma_{n}}\sum_{m\in\mathbbm{Z}} \re\left\{\sqrt{\ehf-\kmm}\right\}\left(|S_{m}^{+}|^{2}e^{-2z_{2}\im\left\{\sqrt{\ehf-\kmm}\right\}}
+|S_{m}^{-}|^{2}e^{2z_{1}\im\left\{\sqrt{\ehf-\kmm}\right\}}\right)
\label{eq:22}
\end{equation*}
where it is understood that the Siegert state $E_{n}$, the pole $\ehf$, and its imaginary part $-i\Gamma_{n}$, as well as the amplitudes $S_{m}^{\pm}$ are all functions of $h$. Note that the values of $z_{1}$ and $z_{2}$ are assumed to be independent of the coupling parameter $h$. This is because $h$ varies in a small interval $J$, and therefore the values of $z_{\pm}$ in Eq.(\ref{eq:7}), which now depend on $h$ are bounded. The condition $z_{1}\leq z_{-}(h)<z_{+}(h)\leq z_{2},\, \forall h\in J$, can therefore be realized by choosing $z_{1}$ and $z_{2}$ sufficiently large.

As $h\rightarrow h_{b}$, then $E_{n}\rightarrow E_{b}$ in $L^{2}(D)$, and therefore the left hand side of Eq.(\ref{eq:12}) remains finite as the dielectric function $\ee(h;\cdot)$ is bounded. The limit of the right hand side as $h\rightarrow h_{b}$ may also be calculated by first computing explicitly the complex square roots involved according to the logarithmic branch cut described in Eq.(\ref{eq:10}).
Put $\xi_{m}(h)=k_{n}^{2}(h)-\kmm,\, m\in\mathbbm{Z}$. Then
\begin{equation*}
\sqrt{\eh(h)-\kmm}=\begin{cases}
                   \sqrt{\frac{\sqrt{\xi_{m}(h)^{2}+\Gamma_{n}(h)^2}+\xi_{m}(h)}{2}}-\frac{i\Gamma_{n}(h)}{\sqrt{2\left(\sqrt{\xi_{m}(h)^{2}+\Gamma_{n}(h)^2}+\xi_{m}(h)\right)}},\,&\text{if $m\in \ib^{op}(\eh(h))$}\\
                   -\frac{\Gamma_{n}(h)}{\sqrt{2\left(\sqrt{\xi_{m}(h)^{2}+\Gamma_{n}(h)^2}-\xi_{m}(h)\right)}}+i\sqrt{\frac{\sqrt{\xi_{m}(h)^{2}+\Gamma_{n}(h)^2}-\xi_{m}(h)}{2}},\,&\text{if $m\in \ib^{cl}(\eh(h))$}
                   \end{cases}
                   \label{eq:23}
                   \end{equation*}
where $\ib^{op}$ and $\ib^{cl}$ are the index sets of Eq.(\ref{eq:17.1}). In particular, if $m\in \ib^{op}(\ehf(h))$, then $\xi_{m}(h)\geq 0$, whereas $\xi_{m}(h)<0$ whenever $m\in \ib^{cl}(\ehf(h))$. As $h\rightarrow h_{b}$, $\Gamma_{n}(h)\rightarrow 0$, and,
\begin{equation}
\begin{split}
\int_{D'}|E_{b}(\ro)|^2\ee(h_{b};\ro)d\ro=\lim_{h\rightarrow h_{b}}&\left[\frac{1}{\Gamma_{n}(h)}\sum_{m\in
\ib^{op}(\es_{b})}\sqrt{\xi_{m}(h)}\left(|S_{m}^{+}(h)|^2+|S_{m}^{-}|^2\right)\right]\\
&-\sum_{m\in\ib^{cl}(\es_{b})}\frac{1}{2 \sqrt{-\xi_{m}(h_{b})}}\left(|S_{m}^{+}(h_{b})|^2 e^{-2 z_{2}\sqrt{-\xi_{m}(h_{b})}}+|S_{m}^{-}(h_{b})|^{2}e^{2z_{1} \sqrt{-\xi_{m}(h_{b})}}\right)
\end{split}
\label{eq:24}
\end{equation}
In particular, for each $m\in \ib^{op}(\es_{b})$, the limits $\lim_{h\rightarrow h_{b}} \frac{|S_{m}^{\pm}(h)|^2}{\Gamma_{n}(h)}$ must exist and must be finite. By continuity of
the functions $h\mapsto S^{\pm}_{m}(h)$ on $J$,
there exist complex numbers $S_{m,b}^{\pm}$ such that
\begin{equation}
\lim_{h\rightarrow h_{b}} \frac{S_{m}^{\pm}(h)}{\sqrt{\Gamma_{n}(h)}}=S_{m,b}^{\pm},\quad\forall m\in \ib^{op}(\es_{b})
\label{eq:25}
\end{equation}

As $z_{1}\rightarrow -\infty$ and $z_{2}\rightarrow \infty$, the second series in Eq.(\ref{eq:24}) converges to zero.  Therefore the function $\rr\rightarrow E_{b}(\rr)\sqrt{\ee(h_{b};\rr)}$ is square integrable on the strip $S=[0,1]\times (-\infty,\infty)$ of the $x,z-$plane, and
\begin{subequations}\label{eq:26}
\begin{equation}\label{eq:26.1}
\int_{S} |E_{b}(\ro)|^{2}\ee(h_{b};\ro) d\ro=\sum_{m\in\ib^{op}(\es_{b})}\sqrt{\xi_{m}(h_{b})}\left(|S_{m,b}^{+}|^2+|S_{m,b}^{-}|^{2}\right)
\end{equation}
Since, $\ee(h_{b},\cdot)\geq 1$, it follows that $E_{b}\in L^{2}(S)$ as claimed. Thus, if $E_{b}$ is a bound state in the radiation continuum, it can always be normalized
by the condition
\begin{equation}\label{eq:26.2}
\int_{S} |E_{b}(\ro)|^{2}\ee(h_{b};\ro) d\ro=1
\end{equation}
\end{subequations}
This result justifies the term "bound state"
introduced by analogy
with quantum mechanics where bound states represent
square integrable eigenfunctions of the Hamiltonian
operator, i.e., an electromagnetic bound state
is a localized solution of Maxwell's equations.

Finally, by continuity $E_{n}(h)\rightarrow E_b$
as $h\rightarrow h_b$, and therefore the normalization condition (\ref{eq:26.2}) determines uniquely the Siegert states $E_{n}(h)$ along the curve $h\mapsto \ehf(h),\, h\in J$.

\subsection{Near field amplification mechanism}
\label{sec:5}

Resonant phenomena in the scattered electromagnetic flux
can be described by the formalism of
Siegert states as given in Eq.(\ref{eq:12}).
However, a complete description requires calculating
the residues $a_{n}$ in Eq.(\ref{eq:12}). Here
these residues are calculated for
Siegert states that arise as perturbations of bound states in the radiation continuum in the sense of Section~\ref{sec:4},
i.e., for the states $E_n(h)$ where $|h-h_b|/h_b\ll 1$.
In particular, it is shown that if the scattering structure
has bound states in the radiation continuum, then
there exist regions (the so called "hot spots") in which
the field amplitude can be amplified as much as desired
by taking the value of the coupling parameter $h$
close enough to $h_b$. The unbounded local growth of the field
amplitude is essentially due to the linearity of Maxwell's
equations. It disappears when a non-linear dielectric
susceptibility, required for large field amplitudes,
is included into Maxwell's equations~\cite{shg}.
Nevertheless, such a local field amplification
enhances quite substantially optical nonlinear effects
in the scattering structure as shown in \cite{shg}.
The following analysis also suggests that the concept to
enhance optically non-linear effects by using
bound states in the radiation continuum is universal
because the existence of "hot spots" is proved to be
a characteristic feature of such scattering structures.

When estimating the coefficients $a_{n}$, it is convenient
first to give another equation for them
that is alternative to that of Eq.(\ref{eq:12}).
Next, this equation  will be analyzed in the vicinity of a bound state in the radiation continuum. In particular, if a bound state in the radiation continuum $E_{b}$ exists at the critical value $h_{b}$ of the coupling parameter, and $J$ is the interval produced by the Regular Perturbation Theorem, while $E_{n}(h;\rr)$ is the Siegert state that arises as a continuous perturbation of the bound state $E_{b}$ for $h\in J$, then it will be proved that $a_{n}(h)\sim\sqrt{\Gamma_{n}(h)}$ as $h\rightarrow h_{b}$. Recall that $\ehf(h)=k_{n}^{2}(h)-i\Gamma_{n}(h)$ is the generalized eigenvalue at which the Siegert state $E_{n}(h)$ exists.

To proceed, put $\eot=\frac{\es-\ehf}{a_{n}}\eo$ where $\eo$ is the amplitude in Eq.(\ref{eq:3}). Then by the decomposition
(\ref{eq:12}),
$\eot\rightarrow E_{n}$ as $\es\rightarrow \es_{n}$. From the system,
\begin{equation*}
\begin{cases}
\Delta \eot+\es\ee(h) \eot=0\\
\Delta E_{n}+\es\ee(h) E_{n}=0
\end{cases}
\label{eq:27}
\end{equation*}
it is derived by Green's theorem that,
\begin{equation*}
\int_{\partial D'}\left(\overline{E_{n}}\nabla \eot-\eot \nabla \overline{E_{n}}\right)\cdot d\mathbf{n_{0}}+(\es-\overline{\es_{n}})\int_{D'} \eot \overline{E_{n}}\ee(h)d\ro=0
\label{eq:28}
\end{equation*}
where $D'$ is the same rectangle as in Eq.~(\ref{eq:14}). Then, by splitting the amplitude $\eot$ in terms of the incident
and scattered waves, $\eot(\rr)=\frac{\es-\es_{n}}{a_{n}}e^{i\kk\cdot\rr}+\eos(\rr)$, one infers that
\begin{equation*}
a_{n}(h)=-\frac{(\es-\es_{n})\left[\int_{\partial D'}(i \kk \overline{E_{n}}-\nabla \overline{E_{n}})e^{i\kk\cdot\rr_{0}}\cdot d\mathbf{n}+(\es-\es_{n})\int_{D'} \overline{E_{n}}e^{i\kk\cdot\ro}\ee(h)d\ro\right]}{\int_{\partial D'} \left(\overline{E_{n}}\nabla \eos-\eos\nabla \overline{E_{n}}\right)\cdot d\mathbf{n}+(\es-\es_{n})\int_{D'} \eos\overline{E_{n}}\ee(h)d\ro}
\label{eq:29}
\end{equation*}
where it is understood that $E_{n}$ and $\es_{n}$ are functions of $h$. Since $a_{n}$ is independent of $\es$, the
desired expression for $a_{n}$ is obtained by taking
 the limit $\es\rightarrow \es_{n}$:
\begin{equation}
a_{n}(h)=-\frac{\int_{\partial D'} \left(i\kk_{n} \overline{E_{n}}-\nabla \overline{E_{n}}\right)e^{i\kk_{n}\cdot\ro}\cdot d\mathbf{n}}{\int_{\partial D'}\left(\overline{E_{n}}\nabla\partial_{\es}\eos-\partial_{\es}\eos\nabla \overline{E_{n}}\right)\cdot d\mathbf{n}\Big|_{\es=\es_{n}}+\int_{D'} |E_{n}|^2\ee(h) d\ro},\quad\, \kk_{n}=\kk\Big|_{\es=\es_{n}}
\label{eq:30}
\end{equation}
where for the first term in the denominator, l'H\^{o}pital's rule has been applied. This formula would not generally be useful as the term $\eos$ involves the coefficients $a_{n}$ implicitly. However, if the Siegert state $E_{n}(h)$ is
taken near a bound state in the radiation continuum, as  assumed here, the first-order perturbative
expression of $a_{n}$ only depends on the Siegert state $E_{n}(h)$.

Such a perturbative expression is obtained by analyzing each of the integrals in Eq.~(\ref{eq:30}) separately.
 First, it is observed that as $h\rightarrow h_{b}$, then $E_{n}(h)\rightarrow E_{b}$. By letting $z_{1}\rightarrow -\infty$, and $z_{2}\rightarrow\infty$, it follows from the normalization of Eq.(\ref{eq:26.2}) that the second integral in the denominator of Eq.(\ref{eq:30}) can be made arbitrarily close to 1 provided $h$ is sufficiently close to $h_{b}$. Hence, if the first integral of the said denominator can be shown to converge to zero in the limits considered, this would imply that in the leading order of perturbation theory, the denominator of Eq.~(\ref{eq:30}) is approximated by 1. That the first integral of the said denominator does indeed converge to zero as $h\rightarrow h_{b}$, $z_{1}\rightarrow-\infty$, and $z_{2}\rightarrow\infty$ can be established through a lengthy, but
straightforward calculation whose details are omitted. It can be carried out along the following lines. As noted earlier, Bloch's condition implies that the first integral in the denominator of Eq.~(\ref{eq:30}) is to be carried out on the segments $(x,z)\in [0,1]\times \{z_{1},z_{2}\}$ of the boundary of $D'$. Now, as $h\rightarrow h_{b}$, the Siegert state $E_{n}(h)$ becomes a bound state $E_{b}$, and therefore, it decays exponentially in the spatial infinity $|z|\rightarrow\infty$. Similarly, the same exponential decay can be established for the terms involving the derivatives of $\eos$ in the limit $h\rightarrow \hb$. It then follows that in the limits $h\rightarrow\hb$, $z_{1}\rightarrow-\infty$ and $z_{2}\rightarrow\infty$, the integral in question vanishes. Thus the denominator of Eq.(\ref{eq:30}) remains indeed close to 1, provided $h$ is sufficiently close to $h_{b}$.

On the other hand, the numerator to Eq.(\ref{eq:30}) can be evaluated in terms of the amplitudes $S_{m}^{\pm}$
in Eqs.(\ref{eq:15}) to yield
\begin{equation*}\label{eq:31.1}
\int_{D'} \left(i\kk_{n} \overline{E_{n}}-\nabla \overline{E_{n}}\right) e^{i\kk_{n}\cdot\ro} \cdot d\mathbf{n}= \sqrt{\Gamma_{n}(h)}A_{n}(h,z_{1},z_{2})
\end{equation*}
where if $\mathbf{k}_{n}=\kx\ex+\sqrt{\ehf-k_{x}^{2}}\ez$, then
\begin{equation*}\label{eq:31.2}
A_{n}(h,z_{1},z_{2})=2i\re\left\{\sqrt{\es_{n}(h)-k_{x}^{2}}\right\} \frac{\overline{S_{0}^{+}}(h)}{\sqrt{\Gamma_{n}(h)}} e^{-2 z_{2} \im\left\{\sqrt{\es_{n}(h)-k_{x}^{2}}\right\}}+2\im\left\{\sqrt{\es_{n}(h)-k_{x}^{2}}\right\} \frac{\overline{S_{0}^{-}}(h)}{\sqrt{\Gamma_{n}(h)}}e^{2i z_{1}\re\left\{\sqrt{\es_{n}(h)-k_{x}^{2}}\right\}}
\end{equation*}
The case $\mathbf{k}_{n}=\kx\ex-\sqrt{\ehf-k_{x}^{2}}\ez$ is similar. As $h\rightarrow h_{b}$, then $\es_{n}(h)\rightarrow \es_{b}$, and since $\es_{b}>k_{x}^{2}$, it follows that $\im\left\{\sqrt{\es_{n}(h)-k_{x}^{2}}\right\}\rightarrow 0$. Hence, in the said limit, $A_{n}(h,z_{1},z_{2})\rightarrow 2i \overline{S_{0,b}^{+}} \sqrt{\es_{b}-k_{x}^{2}}$ for $S_{0,b}^{+}$ given in Eq.(\ref{eq:25}). Thus, in general, $a_{n}$ can be written as
\begin{equation}
a_{n}(h)=\widetilde{a}_{n}(h) \sqrt{\Gamma_{n}(h)}
\label{eq:32}
\end{equation}
where $\widetilde{a}_{n}(h)$ is bounded and has the property $\widetilde{a}_{n}(h)\rightarrow 2i \overline{S_{0,b}^{+}}\sqrt{\es_{b}-k_{x}^{2}}$
as $h\rightarrow h_{b}$.

The most important consequence of the structure of the coefficients $a_{n}$ near a bound state in the radiation continuum is a local amplification of the amplitude $\eo$
as compared to the amplitude of the incident radiation. Indeed, if the wavenumber $k$ of the incident radiation and the coupling parameter $h$ are tuned to satisfy the condition $k=k_{n}(h),\,\forall h\in J$, then the amplitude
(\ref{eq:12}) becomes:
\begin{equation}
\eo(\rr)=i\frac{\widetilde{a}_{n}(h)}{\sqrt{\Gamma_{n}(h)}}E_{n}(\rr)+\widetilde{E}_{a}(h;\rr)
\label{eq:33}
\end{equation}
where $h\mapsto ||\widetilde{E}_{a}(h,\cdot)||_{L^{2}(D)}$ is bounded on $J$. It follows that,
\begin{equation*}
||\eo||_{L^{2}(D)}\geq\left| \frac{|\widetilde{a}_{n}(h)|}{\sqrt{\Gamma_{n}(h)}} ||E_{n}||_{L^{2}(D)}-||\widetilde{E}_{a}||_{L^{2}(D)}\right|
\label{eq:34}
\end{equation*}
As $h\rightarrow h_{b}$, then $E_{n}\rightarrow E_{b}$ in $L^{2}(D)$, and therefore $||E_{n}||_{L^{2}(D)}\rightarrow ||E_{b}||_{L^{2}(D)}\neq 0$. Hence if the constant $S_{0,b}^{+}$ in Eq.(\ref{eq:32}) is nonzero, then,
\begin{equation*}
\lim_{h\rightarrow h_{b}} ||\eo||_{L^{2}(D)}=\infty
\label{eq:35}
\end{equation*}
and this can only happen if the amplitude $\eo$ diverges in some regions of the rectangle $D$. Even though the fact that
$\widetilde{a}_{n}(h)$ is always nonzero in the limit
$h\rightarrow h_b$ (i.e.,
$S_{0,b}^{+}\neq 0$) could not be verified for all kinds of couplings, there are many systems in which it holds true.
For example,
this is the case of the normal incidence ($k_{x}=0$) for a symmetric double array depicted in Fig.~\ref{fig:1}(b) when the dielectric functions $\ee_{1}$ and $\ee_{2}$ in Eq.(\ref{eq:19}) are identical, and symmetric with respect to the reflection $(x,z)\mapsto (x,-z)$. In this particular case, the parity operator, $\widehat{\text{S}}[E](x,z)=E(x,-z)$ on $L^{2}(D)$, and the Lippmann-Schwinger integral operator $\hv$ commute so that Siegert states are always symmetric or skew symmetric with respect to the reflection $(x,z)\mapsto (x,-z)$ in the $x,z$-plane. In particular, the amplitudes $S_{m}^{\pm}$ given in Eqs.(\ref{eq:15}) are such that $S_{m}^{+}=\pm S_{m}^{-}$ depending on whether the Siegert state they correspond to is symmetric or skew symmetric. It follows that if the bound state $E_{b}$ happens to lie in the interval $\ib_{1}$ of the $\ckx$-plane (case of one open diffraction channel), then Eqs.(\ref{eq:26}) are reduced to
\begin{equation*}
|S_{0,b}^{+}|^2=|S_{0,b}^{-}|^{2}=\frac{1}{2\sqrt{\es_{b}-k_{x}^{2}}}
\label{eq:36}
\end{equation*}
so that the coefficient in question is indeed nonzero and the amplitude $\eo$ is amplified in the vicinity of the bound state in the radiation continuum $E_{b}$. Note that the amplification of the amplitude $\eo$ was also observed perturbatively in the case of a more general incidence angle (i.e., $k_{x}\neq 0$) when the bound state $E_{b}$ lies in the intervals $\ib_{1}$ and $\ib_{2}$ of the radiation continuum~\cite{b6}.

The aforementioned amplification can only happen in a region near or within the scattering structure. This can be understood by analyzing the first term of Eq.(\ref{eq:33}) from which the amplification should result. Outside the scattering region, the Siegert state is expressed in terms of its scattered amplitudes by Eq.(\ref{eq:15.1}). As before, the series involved in this expressions split over the index sets $\ib^{op}(\es_{b})$ and $\ib^{cl}(\es_{b})$ ( for $h$ near $h_{b}$, $\ib^{op}(\es_{n}(h))=\ib^{op}(\es_{b})$ and $\ib^{cl}(\es_{n}(h))=\ib^{cl}(\es_{b})$). On one hand, the terms of the series whose indices lie in $\ib^{cl}(\es_{b})$ decay exponentially in the spatial infinity so that no amplification of the field can be obtained from them. On the other hand, the terms whose indices lie in $\ib^{op}(\es_{b})$, disappear near the bound state in the radiation continuum as these terms are proportional to $\sqrt{\Gamma_{n}(h)}$ as indicated by Eq.(\ref{eq:25}). From the physical point of view,
the noted local field amplification results from
a constructive interference of scattered fields
from each scatterer (an elementary cell) of the periodic
structure. It is important to note that the amplification
magnitude can be regulated by varying the coupling
parameter $h$, which provides a natural physical mechanism
to control optical nonlinear phenomena if the structure
has a nonlinear dielectric susceptibility. This mechanism
has been used to demonstrate that a double
periodic array of dielectric cylinders (depicted
in Fig.~\ref{fig:1} (b)) with a non-zero
second-order nonlinear susceptibility can convert as much
as 44\% of the incident flux into the second harmonics
when the distance between the arrays is properly adjusted.
The smallest distance at which such a high conversion rate
can be achieved is about a half of the wave length of the
incident radiation~\cite{shg}.

\section{Decay of electromagnetic Siegert states}
\label{sec:6}

A Siegert state can be excited by an incident radiation, e.g.,
by a wave packet whose spectral distribution peaks around
a desired wave number $\kc\approx k_n$.
When passing through the structure, some of the wave packet energy
is trapped by the structure and remains in there for a long
period of time, decaying slowly by emitting a monochromatic radiation. This is established in a fashion similar to that of the study of the decay of unstable states in quantum mechanics~\cite{b4}. To illustrate the principle, only the case of normal incidence is considered here (i.e., $\kx=0$). Other incidence angles can be treated in a similar manner.

The exact statement which will be proved is as follows.
Suppose that a Siegert state $E_{n}$ exists at
the pole $\es_{n}=k_{n}^2-i\Gamma_{n}$ such that $k_{n}>0$
and $\Gamma_n>0$ (i.e., the Siegert state is
not a bound state in the radiation continuum).
Then this state
will decay with time by emitting a monochromatic radiation at a wavenumber $\widetilde{k}_{n}$, and with a
life time $\tau_{n}$:
\begin{equation}
\widetilde{k}_{n}=\sqrt{\frac{k_{n}^{2}+\sqrt{k_{n}^{4}+\Gamma_{n}^{2}}}{2}}\xrightarrow[\Gamma_{n}\ll 1]{} k_{n},\quad\,\, \tau_{n}=\frac{2\widetilde{k}_{n}}{c \Gamma_{n}} \xrightarrow[\Gamma_{n}\ll 1]{} \frac{2k_{n}}{c\Gamma_{n}}
\label{eq:37}
\end{equation}

As a starting point, consider an incident wave packet,
\begin{equation*}
E_{i}(\rr,t)=\int_{0}^{\infty} A(k) \cos(\kk\cdot\rr-\omega t)dk,\quad\, \omega=c k
\label{eq:38}
\end{equation*}
where $A(k)$ is the distribution of wavenumbers in the wave packet. Then the solution $E$ of the wave equation (\ref{eq:1}) reads
\begin{equation*}
E(\rr,t)=\re\left\{\int_{0}^{\infty} A(k)\eo(\rr)e^{-i \omega t} dk \right\}
\label{eq:39}
\end{equation*}
where $\eo$ is the amplitude of Eq.~(\ref{eq:2}). From the meromorphic expansion of $\eo$ in Eq.~(\ref{eq:12}) it then follows that,
\begin{equation*}
E(\rr,t)=\re\left\{E_{i}(\rr,t)+ \sum_{n} E_{n}(\rr)\Omega_{n}(t)+\int_{0}^{\infty} A(k)E_{a}(k^2,\rr)e^{-i\omega t}dk\right\}
\label{eq:40}
\end{equation*}
where the time dependence $\Omega_{n}(t)$ of the Siegert state $E_{n}$ is,
\begin{equation}
\Omega_{n}(t)=\widetilde{a}_{n}\sqrt{\Gamma_{n}}\int_{0}^{\infty}\frac{A(k)}{k^2-k_{n}^2+i\Gamma_{n}}e^{-i c k t}dk
\label{eq:41}
\end{equation}
for $\widetilde{a}_{n}$ defined in Eq.(\ref{eq:32}). As $t\rightarrow \infty$, then $\Omega_{n}(t)\rightarrow 0$ as required by the Riemann-Lebesgue lemma. For a typical physical wave packet,
the function $A(k)$ decays fast as $k\rightarrow \infty$
and is also analytic in the complex $k-$plane. These properties
allows for evaluating
the integral (\ref{eq:41}) by the standard means of
the complex analysis. It is also worth noting that
bound states in the radiation continuum cannot be excited
by an incident radiation because $\Gamma_n=0$.

To avoid excessive technicalities of the general case,
a specific form of
the distribution $A(k)$ is chosen to illustrate the procedure,
which is sufficient to establish the main properties
of the decay of Siegert states.
The simplest form that is also most commonly
used in physics is
a Gaussian wave packet centered around a wavenumber $\kc$:
\begin{equation*}
A(k)=\frac{e^{-\frac{(k-\kc)^2}{2\sigma^2}}}{\sigma\sqrt{2\pi}}
\label{eq:42}
\end{equation*}
In the limit $\sigma\rightarrow 0$, $A(k)\rightarrow\delta(k-\kc)$,
and the monochromatic case is recovered.
The analysis will be carried out for a Siegert state $E_{n}$ corresponding to a pole $\es_{n}=k_{n}^{2}-i\Gamma_{n}$ such that $k^{2}_{n}>0$ and $\Gamma_n>0$.
Siegert states with $k^{2}_{n}<0$ are discussed
at the end of this section.

The integrand in (\ref{eq:41}) is highly oscillatory for large $t$,
the contour of integration must be deformed to a curve along which
the phase is constant to obtain a fast converging integral, and thereby to determine the leading term.
To this end, first, the change of variable $\xi=\frac{k-\kc+i c t \sigma^{2}}{\sigma\sqrt{2}}$ is made to obtain the following more amenable form for $\Omega_{n}$:
\begin{equation}
\Omega_{n}(t)=\frac{\widetilde{a}_{n}\sqrt{\Gamma_{n}}e^{-i c t \kc-\frac{1}{2}c^2t^2\sigma^2}}{2\sigma^2\sqrt{\pi}}\int_{\mathcal{C}_{0}} \frac{e^{-\xi^2}}{(\xi-\xi_{+})(\xi-\xi_{-})}\,d\xi,\quad\quad\, \xi_{\pm}=\frac{-\kc+i c t \sigma^2\pm \sqrt{\es_{n}}}{\sigma\sqrt{2}}
\label{eq:43}
\end{equation}
where the contour of integration $\mathcal{C}_{0}$ is a horizontal
ray outgoing from the point $\xi_{0}=\frac{-\kc+i c t \sigma^{2}}{\sigma\sqrt{2}}$ toward the infinity in the $\xi$-plane as shown in Fig.~\ref{fig:2}. The position of the poles $\xi_{\pm}$ indicated on the same figure follows from the fact that for $k^{2}_{n}>0$, then $\re\{\sqrt{\es_{n}}\}>0$ while $\im\{\sqrt{\es_{n}}\}<0$.

Put $(u,v)=(\re\, \xi, \im\, \xi)$.
Since the function $\xi\mapsto e^{-\xi^2}$ decays exponentially in the region $\re\{\xi^2\}>0$, the contour of integration in Eq.(\ref{eq:43}) can be deformed to a contour
that consists of the constant phase curve $\mathcal{C}_{1}=\{(u,v)\,|\, uv =-\frac{\kc c t}{2},\ u\leq \re\,\xi_0\}$ extending from $\xi_{0}$ to $-\infty$ and another constant phase contour $\mathcal{C}_{2}: \im\{\xi\}=0 $ from $-\infty$ to $\infty$. Figure~\ref{fig:2} shows the modified contour. By Cauchy's theorem,
$\Omega_{n}$ becomes the sum of three terms:
\begin{equation*}
\Omega_{n}(t)=\widetilde{a}_{n}(A_{n}(t)+B_{n}(t)+C_{n}(t))
\label{eq:44}
\end{equation*}
where $A_{n}$ is the residual contribution at the pole $\xi_{+}$, while $B_{n}$ and $C_{n}$ are the contributions of the line
integrals along the contours $\mathcal{C}_{1}$ and
$\mathcal{C}_{2}$, respectively. It is proved shortly that it is the residual term $A_{n}$ that accounts for an exponential time decay of the Siegert state $E_{n}$.
For a long period of time, this term
dominates in the decay radiation of a Siegert state, and it is only
after it has decayed considerably that
the term $B_{n}$ becomes dominant. The term $C_{n}$ remains small in comparison to $A_{n}$ and $B_{n}$.

To begin, the residual term $A_{n}$ is evaluated to yield the formula,
\begin{equation*}
A_{n}(t)=\pi i\frac{\sqrt{\Gamma_{n}}}{\sqrt{\es_{n}}}A(\sqrt{\es_{n}}) e^{-i c t\sqrt{\es_{n}}}
\label{eq:45}
\end{equation*}
In particular, the wavenumber $\widetilde{k}_{n}$, and the life time $\tau_{n}$ of the Siegert state given in Eq.(\ref{eq:37}) follow from the formula for $\sqrt{\es_{n}}$ which is,
\begin{equation*}
\sqrt{k_{n}^2-i\Gamma_{n}}=\sqrt{\frac{k_{n}^{2}+\sqrt{k_{n}^{4}+\Gamma_{n}^2}}{2}}-i\frac{\Gamma_{n}}{\sqrt{2\left(k_{n}^{2}+\sqrt{k_{n}^{4}+\Gamma_{n}^2}\right)}}
\label{eq:46}
\end{equation*}
Note that for $\Gamma_n\ll 1$ (a narrow resonance),
the amplitude of the Siegert
state is proportional to  $A(\sqrt{\es_{n}})\approx
A(k_n)$. Hence, the Siegert state corresponding to
the scattering resonance at $k=k_n$ is excited
if the Gaussian wave packet has a sufficient amplitude
at $k=k_n$, or, ideally, is centered at $k_{n}$ to achieve the maximal effect.

The terms $B_{n}$ and $C_{n}$ also decay necessarily in time. For $C_{n}$, the integral along $\mathcal{C}_{2}$ will necessarily remain bounded in time so that the multiplicative factor in Eq.(\ref{eq:43}) is dominant, i.e.,
\begin{equation*}
C_{n}(t)=O(e^{-\frac{1}{2}c^2t^2\sigma^2})
\label{eq:47}
\end{equation*}
The same relation holds for $B_{n}$, provided that
\begin{equation}
ct\leq \frac{\kc}{\sigma^2}
\label{eq:48}
\end{equation}
When this condition is violated, the point $\xi_{0}$ moves into the region $\re\{\xi^2\}<0$, and the exponential $e^{-\xi^2}$ starts to grow large on a part of the contour $\mathcal{C}_{1}$. Nonetheless, the term $B_{n}(t)$ still decays to zero in time. This is because, as noted earlier, the Riemann-Lebesgue lemma ensures
the decay of
 $\Omega_{n}(t)$ in time, and the terms $A_{n}$ and $C_{n}$ have been
proved to converge to zero. For the actual decay rate of $B_{n}$, it can be inferred from Laplace's method for the asymptotic expansion of integrals that $B_{n}=O(1/t^{\alpha})$ for some positive number $\alpha$, which is of no importance here however. Indeed, it is obvious that as $t$ increases, the term $B_{n}$ should be larger than the residue at $\xi_{+}$, since the arc of
the curve $\mathcal{C}_{1}$ in the region $\re\{\xi^2\}<0$ grows longer, and therefore the exponential decay of the Siegert state
can no longer be observed above the background described
by $B_n$. The condition (\ref{eq:48}) shows that
the smaller the width $\sigma$ of the wave packet is, the longer the exponential decay can be observed. For instance,
one can ensure the observation of the decay radiation of
the state $E_{n}$ by requiring that the
half-life time occurs before the decay amplitude becomes
smaller than that of the background, i.e.,
\begin{equation*}
c\tau_{n}\leq \frac{\kc}{\sigma^{2}}\quad
\Longleftrightarrow\quad \frac{\ln 2\sqrt{k_{n}^{2}+\sqrt{k_{n}^{4}+\Gamma_{n}^{2}}}}{\kc}\leq \frac{\Gamma_{n}}{\sigma^2}
\label{eq:49}
\end{equation*}
In other words, the width $\sigma$ of the packet must be comparable to $\sqrt{\Gamma_{n}}$, if not smaller.
\begin{figure}[h t]
    \centering
    \includegraphics[scale=0.75]{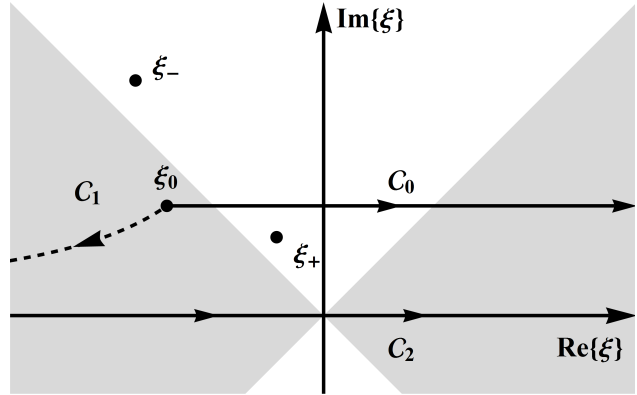}
    \caption{The $\xi$-plane for the integral in Eq.~(\ref{eq:43}). The region $\re\{\xi^2\}\geq 0$ is shadowed.
The exponential $e^{-\xi^2}$ is bounded in it. The sketch is realized under the condition (\ref{eq:48}), when the point $\xi_{0}$ is still in the shadowed region. When the condition is violated, part of the curve $\mathcal{C}_{1}$ lies in the region $\re\{\xi^2\}< 0$.}
    \label{fig:2}
    \end{figure}

For Siegert states with $k^{2}_{n}<0$, the pole $\xi_{+}$ is no longer enclosed by the contours $\mathcal{C}_{0},\,\mathcal{C}_{1}$, and $\mathcal{C}_{2}$ if $t$ is sufficiently large. The pole moves to the left of the contour $\mathcal{C}_{1}$ and the exponential time decay is absent, owing to the fact that the incident wave packet does not contain any radiation at wavenumbers close to $k_{n}$ for the Siegert state to be excited.

\appendix

\begin{center}
{\bf \large Appendix: Proof of the Regular Perturbation Theorem}
\end{center}

Recall that $\forall (h,\es)\in J_{0}\times \ckx, \, \hv(h,\es)\in \mathcal{L}\{L^{2}(D)\}$, and that the map
\begin{equation*}
\begin{split}
J_{0}\times\ckx&\rightarrow \mathcal{L}\{L^{2}(D)\}\\
(h,\es)&\mapsto \hv(h,\es)
\end{split}
\label{eq:a1}
\end{equation*}
is continuously Frechet differentiable. For $(h,\es)\in J_{0}\times\ckx$, let $R_{\lambda}(\hv(h,\es))=[\lambda-\hv(h,\es)]^{-1},\, \lambda\in \mathbbm{C},$ be the resolvent of $\hv(h,\es)$ when it exists, and let $\rho(\hv(h,\es))=\{\lambda\in\mathbbm{C}| R_{\lambda}(\hv(h,\es)) \,\,\text{exists}\}$ be
the resolvent set of $\hv(h,\es)$.
The proof of the Regular Perturbation Theorem is aided
with the following lemma:

\begin{lemma}
Let $\se=\{(h,\es,\lambda)\in J_{0}\times\ckx\times
\mathbbm{C}| \lambda\in \rho(\hv(h,\es))\}$. The set $\se$ is open in  $J_{0}\times\ckx\times\mathbbm{C}$ and nonempty.
\end{lemma}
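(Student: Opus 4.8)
The plan is to realize $\se$ as the preimage of the open set of invertible operators under a continuous map, and to produce a point of $\se$ directly.

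First I would dispatch nonemptiness, which is immediate. Fix any $(h,\es)\in J_{0}\times\ckx$. Since $\hv(h,\es)\in\mathcal{L}\{L^{2}(D)\}$ is bounded, for every $\lambda\in\mathbbm{C}$ with $|\lambda|>\|\hv(h,\es)\|$ the Neumann series $\lambda^{-1}\sum_{j\geq 0}\left(\hv(h,\es)/\lambda\right)^{j}$ converges in operator norm to $[\lambda-\hv(h,\es)]^{-1}$. Hence such a $\lambda$ lies in $\rho(\hv(h,\es))$, so $(h,\es,\lambda)\in\se$ and $\se\neq\emptyset$.

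For openness, I would introduce the map $\Phi:J_{0}\times\ckx\times\mathbbm{C}\to\mathcal{L}\{L^{2}(D)\}$ defined by $\Phi(h,\es,\lambda)=\lambda\,\mathrm{Id}-\hv(h,\es)$, whose value is boundedly invertible precisely when $\lambda\in\rho(\hv(h,\es))$; thus $\se=\Phi^{-1}(\mathcal{G})$, where $\mathcal{G}\subset\mathcal{L}\{L^{2}(D)\}$ denotes the set of boundedly invertible operators. The hypothesis that $(h,\es)\mapsto\hv(h,\es)$ is continuously Frechet differentiable yields its continuity, and the map $\lambda\mapsto\lambda\,\mathrm{Id}$ is plainly continuous, so $\Phi$ is jointly continuous on the product. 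I would then recall the standard fact that $\mathcal{G}$ is open: if $A\in\mathcal{G}$ and $\|B-A\|<\|A^{-1}\|^{-1}$, then $A^{-1}B=\mathrm{Id}-A^{-1}(A-B)$ with $\|A^{-1}(A-B)\|<1$, whence $A^{-1}B$, and therefore $B=A(A^{-1}B)$, is invertible by a Neumann series. Openness of $\se=\Phi^{-1}(\mathcal{G})$ in $J_{0}\times\ckx\times\mathbbm{C}$ then follows at once from the continuity of $\Phi$.

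I do not anticipate a genuine obstacle here: the whole argument rests on standard Banach-algebra facts (openness of the group of invertibles and convergence of Neumann series) together with the continuity of $(h,\es)\mapsto\hv(h,\es)$ already granted by the Frechet-differentiability hypothesis. The only point meriting a line of care is that $\Phi$ is \emph{jointly} continuous in $(h,\es,\lambda)$ rather than merely separately; this is immediate, since $\Phi$ is the sum of a term depending continuously on $(h,\es)$ and a term depending continuously on $\lambda$, each regarded as a map on the full product space.
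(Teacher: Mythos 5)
Your proof is correct, and it takes a genuinely more direct route than the paper's. You realize $\se$ in one stroke as the preimage $\Phi^{-1}(\mathcal{G})$ of the open group $\mathcal{G}$ of invertible elements of $\mathcal{L}\{L^{2}(D)\}$ under the jointly continuous map $\Phi(h,\es,\lambda)=\lambda\,\mathrm{Id}-\hv(h,\es)$, so a single appeal to the openness of $\mathcal{G}$ finishes the job. The paper instead argues pointwise through a two-step factorization: given $(h_{0},\es_{0},\lambda_{0})\in\se$, it fixes an auxiliary $\lambda_{1}\in\rho(\hv(h_{0},\es_{0}))$, uses the first resolvent formula to show $1-(\lambda_{1}-\lambda_{0})R_{\lambda_{1}}(\hv(h_{0},\es_{0}))$ is invertible, invokes the openness of the invertibles twice — once for $(h,\es)\mapsto\lambda_{1}-\hv(h,\es)$ and once for $(h,\es,\lambda)\mapsto 1-(\lambda_{1}-\lambda)R_{\lambda_{1}}(\hv(h,\es))$ — and then multiplies the two factors to recover $\lambda-\hv(h,\es)$ on a neighborhood. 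Both arguments ultimately rest on the same Banach-algebra fact, but yours eliminates the detour through the resolvent identity, since $\lambda-\hv(h,\es)$ is already a continuous function of the triple and no factorization is needed. Your nonemptiness argument also differs and is arguably preferable: the paper cites the Section~\ref{sec:2} analysis (if $\es$ is not a pole of the generalized resolvent, then $(h,\es,1)\in\se$), whereas your Neumann series for $|\lambda|>\|\hv(h,\es)\|$ is self-contained and proves something stronger, namely that every fiber $\{(h,\es)\}\times\mathbbm{C}$ meets $\se$. The one point worth a sentence in a careful write-up is the identification of $\rho(\hv(h,\es))$ with $\{\lambda:\lambda\,\mathrm{Id}-\hv(h,\es)\in\mathcal{G}\}$, i.e., that ``the resolvent exists'' means bounded invertibility; for a bounded operator on a Hilbert space this follows from bijectivity via the bounded inverse theorem, so the identification $\se=\Phi^{-1}(\mathcal{G})$ is legitimate.
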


\begin{proof}
That $\se$ is nonempty follows from the results
of Section~\ref{sec:2}. Indeed, if $\es$ is not a pole of $\es\mapsto [1-\hv(h,\es)]^{-1}$, then $(h,\es,1)\in \se$. To show that $\se$ is open, let $(h_{0},\es_{0},\lambda_{0})\in \se$. Then one has
to prove that there always exists
a neighborhood of $(h_{0},\es_{0},\lambda_{0})$ in $J_{0}\times\ckx\times \mathbbm{C}$ that lies in $\se$.

Since $\rho(\hv(h_{0},\es_{0}))$ is open in $\mathbbm{C}$, there exists $\lambda_{1}\in \rho(\hv(h_{0},\es_{0})),\, \lambda_{1}\neq \lambda_{0}$ so that $[\lambda_{1}-\hv(h_{0},\es_{0})]^{-1}$ exists. Let then $\Psi$ be the function,
 \begin{equation*}
 \begin{split}
 \Psi : \, J_{0}\times\ckx &\rightarrow \mathcal{L}\{L^{2}(D)\}\\
   (h,\es) &\mapsto \lambda_{1}-\hv(h,\es)
   \end{split}
   \label{eq:a2}
   \end{equation*}
The map $\Psi$ is continuous, and $\Psi(h_{0},\es_{0})$ is invertible. As the set of all invertible operators in $\mathcal{L}\{L^{2}(D)\}$ is open, there exists an open neighborhood $U\subset J_{0}\times\ckx$ of $(h_{0},\es_{0})$ such that $\forall (h,\es)\in U$, $\Psi(h,\es)$ is invertible.

Now by the first resolvent formula,
\begin{equation*}
R_{\lambda_{0}}(\hv(h_{0},\es_{0}))-R_{\lambda_{1}}(\hv(h_{0},\es_{0}))=(\lambda_{1}-\lambda_{0})R_{\lambda_{0}}(\hv(h_{0},\es_{0}))R_{\lambda_{1}}(\hv(h_{0},\es_{0}))
\label{eq:a3}
\end{equation*}
so that,
\begin{equation*}
R_{\lambda_{0}}(\hv(h_{0},\es_{0}))\left(1-(\lambda_{1}-\lambda_{0})R_{\lambda_{1}}(\hv(h_{0},\es_{0}))\right)=R_{\lambda_{1}}(\hv(h_{0},\es_{0}))
\label{eq:a4}
\end{equation*}
It follows that $1-(\lambda_{1}-\lambda_{0})R_{\lambda_{1}}(\hv(h_{0},\es_{0}))$ is invertible. Now let $\Phi$ be the function,
\begin{equation*}
\begin{split}
\Phi:\,U\times\mathbbm{C}&\rightarrow \mathcal{L}\{L^{2}(D)\}\\
\left((h,\es),\lambda\right)&\mapsto 1-(\lambda_{1}-\lambda)R_{\lambda_{1}}(\hv(h,\es))
\end{split}
\label{eq:a5}
\end{equation*}
Then $\Phi$ is continuous. Since $\Phi(h_{0},\es_{0},\lambda_{0})$ is invertible, and again, the set of all invertible operators in $\mathcal{L}\{L^{2}(D)\}$ is open, it follows that there exists an open set $W\subset U\times\mathbbm{C}$ such that for all $(h,\es,\lambda)\in W$, the operator $\Phi(h,\es,\lambda)$ is invertible.

Thus, for $(h,\es,\lambda)\in W$, both $\lambda_{1}-\hv(h,\es)$ and $1-(\lambda_{1}-\lambda)R_{\lambda_{1}}(\hv(h,\es))$ are invertible. Their composition is therefore invertible, and this composition is,
\begin{equation*}
(\lambda_{1}-\hv(h,\es))(1-(\lambda_{1}-\lambda)R_{\lambda_{1}}(\hv(h,\es)))=\lambda-\hv(h,\es)
\label{eq:a6}
\end{equation*}
It follows that $W\subset \se$. As $W$ is a neighborhood of $(h_{0},\es_{0},\lambda_{0})$ in $J_{0}\times\ckx \times \mathbbm{C}$,
 the set $\se$ is open.
\end{proof}

The proof of the Regular Perturbation Theorem is as follows.
\begin{proof}
Let $\es_{n}$ be a simple pole of $\es\mapsto [1-\hv(h_{n},\es)]^{-1}$ for some fixed $h_{n}\in J_{0}$, and let $E_{n}$ be the corresponding Siegert state. Then 1 is an eigenvalue of $\hv(h_{n},\es_{n})$ corresponding to the eigenfunction $E_{n}$ in the usual sense. The theorem amounts to proving the existence of a curve $h\mapsto \es(h)$ in $J_{0}\times\ckx$ along which the family of operators $\hv(h,\es(h))$ still has 1 as an eigenvalue. The corresponding eigenfunctions will then be the Siegert states of the said operators along the curve in question.

The proof starts by providing a general formula for the eigenvalue $\lambda_{0}(h,\es)$ of $\hv(h,\es)$ which is the perturbed value of the eigenvalue 1 when the point $(h_{n},\es_{n})$ is displaced to $(h,\es)$ in the $J_{0}\times \ckx$ space. Then by the Implicit Function Theorem, a curve $h\mapsto\es(h)$ is found along which the eigenvalue $\lambda(h,\es)$ remains 1, i.e., $\lambda(h,\es(h))=1$.

As a starting point, note that as $\hv(h_{n},\es_{n})$ is a compact operator on the Hilbert space $L^{2}(D)$, the Riesz-Schauder
theorem \cite{b2} implies that 1 is necessarily an isolated eigenvalue. Therefore, there exists $\delta>0$ such that 1 is the only eigenvalue of $\hv(h_{n},\es_{n})$ in the disk $\{\lambda
\in \mathbbm{C}\ |\ |\lambda-1|\leq \delta\}$. It follows that $V_{\delta}=\{(h_{n},\es_{n},\lambda)|\lambda\in \mathbbm{C},\, |\lambda-1|=\delta\}\subset \mathcal{S}\}$ where $\mathcal{S}$ is the set of the previous lemma. Since $\mathcal{S}$ is open, and $V_{\delta}$ is compact, there exists an open $W$ in $J_{0}\times\ckx
\times\mathbbm{C}$ such that $V_{\delta}\subset W\subset \mathcal{S}$.  Therefore there exists a connected neighborhood $U$ of $(h_{n},\es_{n})$ in $J_{0}\times \ckx$ such that $\lambda\in \rho(\hv(h,\es))$
for all $(h,\es)\in U$, and for all $\lambda\in\mathbbm{C}$ with $|\lambda-1|=\delta$.

Put
\begin{equation}
\pv(h,\es)=\frac{1}{2\pi i}\oint_{|\lambda-1|=\delta} [\lambda-\hv(h,\es)]^{-1} d\lambda,\quad \, (h,\es)\in U
\label{eq:a7}
\end{equation}
Then $\pv(h,\es)\in\mathcal{L}\{L^{2}(D)\}$, and by Theorem XII.6 of~\cite{b3}, $\pv(h,\es)$ is a projection. As the map $(h,\es)\mapsto \pv(h,\es)$ of $U$ to $\mathcal{L}\{L^{2}(D)\}$ is continuous, it follows that the dimension of the range of $\pv(h,\es)$ is constant throughout $U$.

Since the eigenvalue 1 of $\hv(h_{n},\es_{n})$ is nondegenerate, it follows that the range of $\pv(h_{n},\es_{n})$ has dimension 1, and therefore the dimension of the range of $\pv(h,\es))$ is 1 throughout $U$. By Theorem XII.6 of~\cite{b3}, it follows that for all $(h,\es)\in U$, there exists a unique eigenvalue $\lambda_{0}(h,\es)$ of $\hv(h,\es)$ in $\{\lambda\in\mathbbm{C}\,|\, |\lambda-1|\leq \delta\}$, and $\pv(h,\es)$ is the projection on the corresponding eigenspace. In particular, $\pv(h,\es)[E_{n}]$ is in the eigenspace of $\lambda_{0}(h,\es)$, and therefore,
\begin{equation*}
\hv(h,\es)\left[\pv(h,\es)[E_{n}]\right]=\lambda_{0}(h,\es) \pv(h,\es)[E_{n}],\,\quad (h,\es)\in U
\label{eq:a8}
\end{equation*}
Hence,
\begin{equation*}
\lambda_{0}(h,\es)=\frac{\langle E_{n}, \hv(h,\es)\left[\pv(h,\es)[E_{n}]\right]\rangle}{\langle E_{n},\pv(h,\es)[E_{n}]\rangle},\quad
\, (h,\es)\in U
\label{eq:a10}
\end{equation*}
where once again, $\langle \cdot, \cdot \rangle$ is the inner product on $L^{2}(D)$. This is the formula for the perturbed eigenvalue announced in the preamble of this proof.

Now, observe that since $(h,\es)\mapsto \hv(h,\es)$ is continuously Frechet differentiable in $U$, so is $(h,\es)\mapsto \pv(h,\es)$. Therefore, $\lambda_{0}$ is continuously Frechet differentiable in $U$. Now, $\lambda_{0}(h_{n},\es_{n})=1$, and as it will be shown shortly,  $\partial_{\es}\lambda_{0}(h_{n},\es_{n})\neq 0$. It follows by the Implicit Function Theorem for Banach spaces that there exists an open interval $J\subset J_{0}$ containing $h_{n}$, and a unique continuously differentiable function $h\mapsto \es_{n}(h),\, h\in J,$ such that $\es_{n}(h_{n})=\es_{n}$ and $\lambda_{0}(h,\es_{n}(h))=1$ for all $h\in J$. Thus the proof of the Regular perturbation theorem is complete, provided the
property $\partial_{\es}\lambda_{0}(h_{n},\es_{n})\neq 0$
is established. The latter is achieved by studying the analyticity of the resolvent $R_{\lambda}(\hv(h_{n},\es))=[\lambda-\hv(h_{n},\es)]^{-1}$ in the variables $\lambda$ and $\es$ separately
 when $h=h_{n}$ is fixed.

Let $\lambda\in\mathbbm{C}$ and $\es\in\ckx$ such that $|\lambda-1|=\delta$, $(h_{n},\es)\in U$, and $\es\neq\es_{n}$. Then the resolvent $R_{\lambda}(\hv(h_{n},\es))=[\lambda-\hv(h_{n},\es)]^{-1}$ exists by definition of the neighborhood $U$. Also, the resolvent $ R_{1}(\hv(h_{n},\es))=[1-\hv(h_{n},\es)]^{-1}$ exists and is meromorphic in $\es$. By the first resolvent formula,
\begin{equation}
R_{\lambda}(\hv(h_{n},\es))-R_{1}(\hv(h_{n},\es_{n}))=(1-\lambda)R_{\lambda}(\hv(h_{n},\es))R_{1}(\hv(h_{n},\es))
\label{eq:a11}
\end{equation}
Substituting Eq.(\ref{eq:a11}) into Eq.(\ref{eq:a7}), it follows that,
\begin{equation}
\pv(h_{n},\es)=\frac{1}{2\pi i}\oint_{|\lambda-1|=\delta}(1-\lambda)R_{\lambda}(\hv(h_{n},\es))R_{1}(\hv(h_{n},\es))d\lambda
\label{eq:a12}
\end{equation}
Now, the meromorphic expansion of $\lambda\mapsto R_{\lambda}(\hv(h_{n},\es))$ at $\lambda_{0}(h_{n},\es)$ is,
\begin{equation}
R_{\lambda}(\hv(h_{n},\es))=\frac{1}{\lambda-\lambda_{0}(h_{n},\es)} \pv(h_{n},\es)+\widetilde{R}_{\lambda}(h_{n},\es)
\label{eq:a13}
\end{equation}
where $\lambda\mapsto\widetilde{R}_{\lambda}(h_{n},\es)$ is analytic in the disk $\{\lambda\in\mathbbm{C}:\, |\lambda-1|\leq \delta\}$. The substitution of Eq.(\ref{eq:a13}) into Eq.(\ref{eq:a12}) yields
\begin{equation}
\pv(h_{n},\es)=(1-\lambda_{0}(h_{n},\es))\pv(h_{n},\es)R_{1}(\hv(h_{n},\es))
\label{eq:a14}
\end{equation}
Now, at the pole $\es_{n}$, the generalized resolvent $\es\mapsto R_{1}(h_{n},\es)$ has the meromorphic expansion:
\begin{equation}
R_{1}(h_{n},\es)=\frac{1}{\es-\es_{n}}\hv_{n}+\widetilde{R}(h_{n},\es)
\label{eq:a15}
\end{equation}
where $\hv_{n}$ is the residue of $\es\mapsto R_{1}(h_{n},\es)$ as given in Eq.(\ref{eq:11}), and $\es\mapsto \widetilde{R}(h_{n},\es)$ is analytic in the vicinity of $\es_{n}$. Substituting Eq.(\ref{eq:a15}) into Eq.(\ref{eq:a14}) and taking the limit as $\es\rightarrow \es_{n}$ yields the equation,
\begin{equation}
\pv(h_{n},\es_{n})=-\partial_{\es}\lambda_{0}(h_{n},\es_{n})\pv(h_{n},\es_{n})\hv_{n}
\label{eq:a16}
\end{equation}
Next, the action of both the sides of this operator equality is
evaluated on the state $E_{n}$. Since $\pv(h_{n},\es_{n})$ projects on $E_{n}$, and in light of Eq.(\ref{eq:11}), it follows that,
\begin{equation*}
E_{n}=-\langle \varphi_{n}, E_{n}\rangle \partial_{\es}\lambda_{0}(h_{n},\es_{n})E_{n}
\label{eq:a17}
\end{equation*}
As the Siegert state $E_{n}$ is not identically zero, it follows that $\partial_{\es}\lambda_{0}(h_{n},\es_{n})\neq 0$, and the proof of the theorem is complete.

A final noteworthy remark is that the projection $\pv(h_{n},\es_{n})$ and the residue $\hv_{n}$ are proportional,
which is established by applying both the sides of
Eq.(\ref{eq:a16}) to an arbitrary state $F\in L^{2}(D)$:
\begin{equation*}
\pv(h_{n},\es_{n})[F]=-\partial_{\es}\lambda_{0}(h_{n},\es_{n})\hv_{n}[F]
\label{eq:a18}
\end{equation*}
\end{proof}

\end{document}